\def\ps@headings{%
\def\@oddhead{\mbox{}\scriptsize\rightmark \hfil \thepage}%
\def\@evenhead{\scriptsize\thepage \hfil \leftmark\mbox{}}%
\def\@oddfoot{}%
\def\@evenfoot{}}
\begin{document}

\allowdisplaybreaks

\title{A  Linear Network Code Construction for \\ General Integer Connections Based on \\ the Constraint Satisfaction Problem}

\author{Ying Cui,\thanks{Y. Cui and F. Lai are  with Shanghai Jiao Tong University. M. M\'{e}dard is  with MIT. E. Yeh is with Northeastern University. D. Leith is with Trinity College Dublin. K. Duffy is with Maynooth University.  D. Pandya is with  Harvard University.  The paper was presented in part in GLOBECOM 2015.} \ Muriel M\'{e}dard, \ Fan Lai, \  Edmund Yeh, \ Douglas Leith, \ Ken Duffy,\ Dhaivat Pandya}



\maketitle

\newtheorem{Prop}{Proposition}
\newtheorem{Thm}{Theorem}
\newtheorem{Lem}{Lemma}
\newtheorem{Cor}{Corollary}
\newtheorem{Def}{Definition}
\newtheorem{Exam}{Example}
\newtheorem{Alg}{Algorithm}
\newtheorem{Sch}{Scheme}
\newtheorem{Prob}{Problem}
\newtheorem{Rem}{Remark}
\newtheorem{Proof}{Proof}
\newtheorem{Asump}{Assumption}
\newtheorem{Subp}{Subproblem}

\begin{abstract}  The problem of finding network codes for general connections is  inherently difficult  in capacity constrained networks. 
Resource minimization for general connections with network coding  is further complicated.
Existing methods for identifying solutions mainly rely on  highly restricted classes of network codes, and are almost all centralized. In this paper,  we introduce linear network mixing coefficients for code constructions of general connections that generalize random linear network coding (RLNC) for multicast connections. For such code constructions, we pose the problem of cost minimization for the subgraph involved in the coding solution and relate this minimization to a path-based  Constraint Satisfaction Problem (CSP) and an edge-based CSP. While CSPs are NP-complete in general, we
present a path-based probabilistic distributed algorithm and an edge-based probabilistic distributed algorithm with almost sure convergence in finite time  by  applying Communication Free Learning (CFL).
Our approach allows fairly general coding across flows, guarantees no greater cost than routing,    and  shows a possible distributed implementation.
Numerical results illustrate the performance improvement of our approach over  existing methods.
\end{abstract}


\begin{keywords}
network coding, network mixing, general connection, resource optimization, distributed algorithm. 
\end{keywords}

\section{Introduction}\label{sec:intro}

The problem of finding network codes in the case of general connections, where each destination can request information from any subset of sources, is intrinsically difficult and little is known about its complexity.
In certain special cases, such as multicast connections (where destinations share all of their demands), it suffices to satisfy a Ford-Fulkerson type of min-cut max-flow constraint between all sources to every destination individually.
For multicast connections, linear codes suffice \cite{KM03, LYC03}, and lend themselves to a distributed random construction \cite{Hoetal06}.  While linear codes have been the most widely considered in the literature, linear codes over finite fields may in general not be sufficient for general connections, as shown by \cite{DFZ05} using an example from matroid theory.


A matroidal structure for the network coding problem with general connections was conjectured by the late Ralf K\"otter (private communication) but, while different aspects of this connection have been  investigated in the literature \cite{ESG09, SHL08, DFZ07, KM10, YYZ, CGP11, SMISIT14},
a proof  remains elusive, except in special cases.
 Recently, the problem of scalar-linear coding has been shown to have a matroidal structure \cite{DFZ07, DFZ07IT, KM10}.
There exists a correspondence between scalar-linearly solvable networks and representable matroids over finite fields, which can be used to obtain some bounds on scalar linear network capacity \cite{KM11} or the capacity regions of certain classes of networks \cite{DFZ12}.
More generally, the problem of finding the linear network coding capacity
region is equivalent to the characterization of all linear polymatroids \cite{YYZ}, whose structure  was investigated in \cite{CGP11}.
Reference \cite{SMISIT14}  generalized the results of \cite{ESG10}, which investigated the connection among index coding, network coding and matroid theory.   In \cite{SMCAllerton2015},  polymatroids  were used to produce linear code constructions.

Progress in understanding the matroidal structure of the general connection problem has, however, not yet provided simple and useful approaches to generating explicit linear codes.
There has been considerable investigation of restricted cases, such as
a network with only  two sources and two destinations, generally referred to as
 the two-unicast network \cite{ZCM12-1, WS10, KTA11,  KamathATW15, ZCM14}, but thus far such investigation has yielded only bounds or linear solutions for restricted cases of the two-unicast network.  It has been  shown in \cite{KamathATW15} that the two-unicast problem is as hard as the most general network coding problem. 
 Since the difficulty of coding in the case of general connections is in effect an interference cancellation one, approaches relying on interference alignment have naturally been explored \cite{MDRJMV13, ZCM13, MCJ12}.  Reference \cite{LiWW15} investigated the enumeration, rate region computation and hierarchy of general multi-source multi-sink
hyper-edge networks under network coding.

Even when we consider simple scalar network codes, which have scalar coding coefficients, the problem of code construction  for general connections  remains vexing. 
The main difficulty lies in cancelling the effect of flows that are coded together even though they are not destined for a common destination.
The problem of code construction is further complicated when we seek, for common reasons of network resource management, to limit fully or partially  the use of links in the network.
For convex cost functions of  flows over edges in the graph corresponding to the network, finding a minimum-cost  solution is known to be a convex optimization problem in the case of multicast connections (for continuous flows)\cite{Lunetal06}.
However, in the case of general connections, network resource minimization, even when allowing only restricted code constructions, appears difficult.

Among coding approaches for optimizing network use for general connections, we distinguish two types. The first,  which we adopt in this paper, is that of mixing,  by which we mean
coding together flows  using  random linear network coding (RLNC)~\cite{Hoetal06}, originally proposed for multicast connections.  The principle  is to code together flows as though they were part of a common multicast connection.
In this case, no explicit coding coefficients are provided, and decidability is ensured with high probability by  RLNC.
For example,  the mixing approaches in \cite{Lun04networkcoding} and \cite{WuISIT2006} are both based on mixing variables, each corresponding to a set of flows that can be mixed over an edge.
Specifically, in \cite{Lun04networkcoding}, a two-step mixing approach is proposed for network resource minimization of general connections, where flow partition (mixing) and flow rate optimization are considered separately.  This separation imposes stronger restrictions on the mixing  design in the first step and leads to a limitation on the feasibility region. Reference  \cite{WuISIT2006} studies the  feasibility of  more general mixing designs based on  mixing variables of size $\mathcal O(2^P)$, where $P$ is the number of flows. Reference \cite{WuISIT2006} does not, however, provide an approach for obtaining a specific mixing design.
The second type of coding approach is an explicit linear code construction, by which we mean providing specific linear coefficients over some finite field, to be applied to  coding flows at different nodes. Often these constructions are simplified by restricting them to be pairwise. 
For example, in \cite{TRKLM06} and \cite{ShroffBButterflyrate},   simple codes over pairs of flows   are proposed for network resource minimization of  general connections.

 Some explicit linear network code construction approaches \cite{TRKLM06}, \cite{ShroffBButterflyrate} are distributed, but they allow only pairwise coding.
 The algorithms of \cite{KOMT09} using evolutionary techniques, which are also explicit code constructions, are partially distributed, since
the chromosomes can be decomposed into their local contributions, but  require  
information  to be fed back from the receivers to all the nodes in the network.   In addition, the convergence results for  evolutionary techniques are generally scant   and do not yield prescriptive constructions. 
While RLNC for multicast connections is a distributed algorithm, most of the mixing approaches \cite{Lun04networkcoding}, \cite{WuISIT2006} based on it have remained centralized.
In \cite{CUI2015ICC}, we propose new methods for constructing linear network codes for general connections of continuous flows based on mixing to minimize the total network cost. Flow splitting and coding over time are required to achieve the desired performance. The focus in \cite{CUI2015ICC} is to apply continuous optimization techniques to obtain continuous flow rates. In \cite{CUI2015GLOBECOM}, we consider linear network code construction for general connections of integer flows based on mixing, and propose an edge-based probabilistic distributed algorithm  to minimize the total network cost. This paper extends the results in \cite{CUI2015GLOBECOM}.  

Our contribution  in this paper is to present new methods for constructing linear network codes  in a distributed manner for general connections of integer flows based on mixing.

$\bullet$  We introduce linear network mixing coefficients.  The number of mixing coefficients grows   polynomially with the number of flows. We formally establish the relationship between linear network coding and mixing.

$\bullet$ We formulate   the minimization of the cost of the subgraph involved in the code construction for general  connections of integer flows in terms of  the mixing coefficients.

$\bullet$ We relate our problem to a  path-based  Constraint Satisfaction Problem  (CSP) and an edge-based CSP. While CSPs are NP-complete in general, we
present a path-based probabilistic distributed algorithm and an edge-based probabilistic distributed algorithm with almost sure convergence in finite time by  applying  Communication Free Learning (CFL), a recent probabilistic distributed solution for CSPs\cite{cfl}. The path-based distributed algorithm requires more local information than the edge-based distributed algorithm, but converges faster.



$\bullet$ We show that our approach  guarantees no greater cost than routing  or the  simplified mixing design in \cite{Lun04networkcoding}. Numerical  results also illustrate the performance improvement of our approach over   existing methods.

While our approach, like all other general connection code constructions, is generally suboptimal, it allows more flows to be mixed than is possible with pairwise mixing \cite{TRKLM06}, \cite{ShroffBButterflyrate} and with the separate mixing design in \cite{Lun04networkcoding}.   Moreover,   in contrast to  \cite{Lun04networkcoding,WuISIT2006,CUI2015ICC}, our approach does not require  non-scalar  coding over time.

\section{Problem Setup and Definitions}


\subsection{Network Model}

We consider a directed acyclic network with general connections.\footnote{The network model   we considered in this paper is similar to that in \cite{CUI2015ICC} for continuous flows, but here we consider integer flows and edge capacities, and  do not allow flow splitting and coding over time.} Let $\mathcal G=(\mathcal V, \mathcal E)$ denote the  directed acyclic  graph, where $\cal V$ denotes the set of $V=|\cal V|$ nodes and $\cal E$ denotes the set of $E=|\cal E|$ edges.  To simplify notation, we assume there is only one edge from  node $i\in\mathcal V$ to node $j\in \mathcal V$, denoted as edge $(i,j)\in\mathcal E$.\footnote{Multiple edges from node $i$ to node $j$ can be modeled by introducing multiple extra nodes,  one on each edge, to transform a multigraph intro a graph.}
For each node $i\in \cal V$, define the set of incoming neighbors to be $\mathcal I_i=\{j: (j,i)\in \mathcal E \}$ and the set of outgoing neighbors to be $\mathcal O_i=\{j:(i,j)\in \mathcal E\}$.  Let $I_i=|\mathcal I_i|$ and $O_i=|\mathcal O_i|$ denote the in-degree and out-degree of node $i\in \mathcal V$, respectively. Assume $I_i \leq D$ and $O_i\leq D$ for all $i \in \mathcal V$, where $D$ is a constant.

Consider a  finite field $\mathcal F$ with size $F=|\mathcal F|$. 
Let $\mathcal P=\{1,\cdots, P\}$ denote the set of $P=|\mathcal P|$ flows  of symbols in finite field $\mathcal F$  to be carried by the network.  For each flow $p\in \mathcal P$, let $s_p\in \mathcal V$ be its source.  We consider integer flows. To simplify notation,   we assume unit source rate (i.e., one finite field symbol per second).\footnote{A source with  a positive integer source rate greater than one can be  modeled by multiple sources, each with unit source rate.}
Let $\mathcal S=\{s_1,\cdots, s_P\}$ denote the set of $P=|\mathcal S|$ sources.
We assume different flows do not share a common source node and no source node has any incoming edges.
Let $\mathcal T=\{t_1,\cdots, t_T\}$ denote the set of $T=|\mathcal T|$ terminals. Each terminal $t\in \mathcal T$ demands a subset of $P_t=|\mathcal P_t|$ flows $\mathcal P_t\subseteq \mathcal P$. Assume $\cup_{t\in\mathcal T}\mathcal P_t=\mathcal P$.  Let $\boldsymbol{\mathcal P}\triangleq (\mathcal P_t)_{t\in\mathcal T}$ denote the demands of all the terminals.   We assume no terminal has any outgoing edges.


 As we consider integer flows, we assume unit edge capacity (i.e., one finite field symbol per second).\footnote{An edge with a positive integer edge capacity greater than one can be equivalently converted to multiple edges, each with unit edge capacity.}
 Let $z_{ij}\in \{0,1\}$ denote   whether edge $(i,j)\in \mathcal E$ is  in the subgraph involved in the code construction  in a sense we shall make precise later.\footnote{There is either no flow or a unit rate of (coded) flow through each edge. Under the unit source rate and edge capacity assumptions, we shall see that there is one global coding (mixing) vector for each edge.} We assume a cost is incurred on an edge when information is transmitted through the edge  and let $U_{ij}(z_{ij})$ denote the cost function for edge $(i,j)$.  We assume $U_{ij}(z_{ij})$ is  non-decreasing  in $z_{ij}$.
 We are interested in the problem of finding  linear network coding designs and  minimizing the network cost $\sum_{(i,j)\in \mathcal E} U_{ij}(z_{ij})$ for general connections under those designs.


\subsection{Scalar Time-Invariant Linear Network Coding}\label{subsec:coding}

In linear network coding,   a linear combination  over $\mathcal F$ of the symbols in $\{\sigma_{ki}\in \mathcal F:k\in \mathcal I_i\}$ from the incoming edges $\{(k,i):k\in \mathcal I_i\}$ can be transmitted through the shared edge $(i,j)\in \mathcal E$.  The coefficients used to form this linear combination are referred to as local coding coefficients. Specifically,  let $\alpha_{kij}\in \mathcal F$ denote the local coding coefficient corresponding to edge $(k,i)\in \mathcal E$ and edge $(i,j)\in \mathcal E$.  Denote $\boldsymbol \alpha \triangleq (\alpha_{kij})_{(k,i),(i,j)\in\mathcal E}$. Then, for linear network coding,  using local coding coefficients, the symbol through edge $(i,j)\in \mathcal E$ can be expressed as
\begin{align}
\sigma_{ij}=\sum_{k\in \mathcal I_i}\alpha_{kij}\sigma_{ki}, \quad (i,j)\in \mathcal E, \ i\not\in \mathcal S.\label{eqn:local-coding-coeff}
\end{align}
This is illustrated in Fig.~\ref{Fig:local-global}.  

Starting from the sources, we transmit source symbols $\{\sigma_p\in \mathcal F:p\in \mathcal P\}$, and then, at intermediate nodes, we perform only linear  operations over $\mathcal F$ on the symbols from incoming edges. Thus, the symbol of each edge can be expressed as a linear combination  over $\mathcal F$ of the source symbols $\{\sigma_p\in \mathcal F:p\in \mathcal P\}$. Let $c_{ij,p}\in \mathcal F$ denote the coefficient of flow $p\in \mathcal P$ in the linear combination for edge $(i,j)\in \mathcal E$.
This is referred to as the global coding coefficient of  flow $p\in \mathcal P$ and edge $(i,j)\in \mathcal E$. Let $\mathbf c_{ij}\triangleq (c_{ij,1},\cdots, c_{ij,p},\cdots,c_{ij,P})\in \mathcal F^P$ denote $P$ coefficients corresponding to this linear combination for edge $(i,j)\in \mathcal E$.  This is referred to as the global coding vector of edge $(i,j)\in \mathcal E$. Here, $\mathcal F^P$ represents the set of global coding vectors, the  cardinality  of which is $F^P$.
Then, using global coding vectors, the symbol through edge $(i,j)\in \mathcal E$ can also be  expressed as
\begin{align}
\sigma_{ij}
=\sum_{p\in \mathcal P}c_{ij,p}\sigma_p, \quad (i,j)\in \mathcal E, \  i\not\in \mathcal S.\label{eqn:global-coding-coeff}
\end{align}
This is illustrated in Fig.~\ref{Fig:local-global}.  

\begin{figure}[h]
\begin{center}
\includegraphics[height=2.5cm]{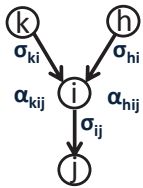}
\caption{\small{Illustration of local and global coding coefficients. $\mathcal P=\{1,2\}$. Then, we have $\sigma_{ki}=c_{ki,1}\sigma_1+c_{ki,2}\sigma_2$, $\sigma_{hi}=c_{hi,1}\sigma_1+c_{hi,2}\sigma_2$, $\sigma_{ij}=\alpha_{kij}\sigma_{ki}+\alpha_{hij}\sigma_{hi}=c_{ij,1}\sigma_1+c_{ij,2}\sigma_2$, $c_{ij,1}=\alpha_{kij}c_{ki,1}+\alpha_{hij}c_{hi,1}$ and $c_{ij,2}=\alpha_{kij}c_{ki,2}+\alpha_{hij}c_{hi,2}$.}}\label{Fig:local-global}
\end{center}
\end{figure}

In this paper, we consider scalar  time-invariant linear network coding.
In other words, $\alpha_{kij}\in \mathcal F$ and $c_{ij,p}\in \mathcal F$ are both scalars, and  do not change over time.
Let $\mathbf e_p$ denote the vector with the $p$-th element being 1 and all the other elements being 0.  For decodability to hold at all the terminals, the global coding vectors at all edges must satisfy the following  feasibility  condition for scalar linear network coding.

\begin{Def} [Feasibility of Scalar Linear Network Coding] For a network $\mathcal G=(\mathcal V, \mathcal E)$ and a set of flows $\mathcal P$ with sources $\mathcal S$  and  terminals $\mathcal T$, a linear network code  $\mathbf c\triangleq (\mathbf c_{ij})_{(i,j)\in\mathcal E}$ 
 is called feasible if the following three conditions are satisfied: 1) $\mathbf c_{s_pj}=\mathbf e_p$ for source edge $(s_p,j)\in \mathcal E$, where  $s_p\in \mathcal S$ and $ p\in \mathcal P$;  2) $\mathbf c_{ij}=\sum_{k\in \mathcal I_i}\alpha_{kij}\mathbf c_{ki}$ for edge  $(i,j) \in \mathcal E$ not outgoing from a source, where $i\not\in \mathcal S$ and $\alpha_{kij}\in \mathcal F$;  3)   $\mathbf e_p\in \text{span}\{\mathbf c_{it}: i\in \mathcal I_t\}$, where $ p\in \mathcal P_t$ and $t\in \mathcal T$.
\label{Def:feasibility}
\end{Def}

 Note that when using scalar linear network coding, for each terminal,  extraneous  flows are allowed to be mixed with the  desired flows on the paths to the terminal, as the  extraneous flows can be cancelled at intermediate nodes or at the terminal.

\subsection{Scalar Time-Invariant Linear Network Mixing}\label{subsec:mixing}

As mentioned in Section~\ref{sec:intro},
to facilitate distributed linear network code designs  for general connections using the mixing concept (without requiring the specific values of local or global coding coefficients   in  the designs), we introduce  local and global mixing variables.  Later, we shall see that distributed linear network mixing designs in terms of these mixing coefficients are much easier. 
Specifically, we introduce  the local mixing  coefficient $\beta_{kij}\in \{0,1\}$ corresponding to edge $(k,i)\in \mathcal E$ and edge $(i,j)\in \mathcal E$,
which relates to the local coding coefficient
$\alpha_{kij}\in \mathcal F$.  Denote $\boldsymbol \beta \triangleq (\beta_{kij})_{(k,i),(i,j)\in\mathcal E}$. 
$\beta_{kij}=1$ indicates that symbol $\sigma_{ki}$ of edge $(k,i)\in \mathcal E$ is allowed  (under our construction)  to contribute to the linear combination over $\mathcal F$ forming symbol $\sigma_{ij}$ in \eqref{eqn:local-coding-coeff} and $\beta_{kij}=0$ otherwise. Thus, if $\beta_{kij}=0$, we have  $\alpha_{kij}=0$;  if $\beta_{kij}=1$, we can further  determine how symbol $\sigma_{ki}$   contributes to  the linear combination  forming symbol $\sigma_{ij}$  by choosing   $\alpha_{kij}\in \mathcal F$  (note that $\alpha_{kij}$ can be zero when $\beta_{kij}=1$).

Similarly, we introduce the global mixing coefficient $x_{ij,p}\in\{0,1\}$ of  flow $p\in \mathcal P$ and edge $(i,j)\in \mathcal E$, which relates to the global coding coefficient $c_{ij,p}\in\mathcal F$. $x_{ij,p}=1$ indicates that flow $p$ is allowed (under our construction)   to be  mixed (coded) with other flows, i.e.,  symbol $\sigma_p$ is allowed to contribute to the linear combination over $\mathcal F$ forming symbol $\sigma_{ij}$ in \eqref{eqn:global-coding-coeff}, and $x_{ij,p}=0$ otherwise. Thus, if $x_{ij,p}=0$, we have $c_{ij,p}=0$;  if $x_{ij,p}=1$, we can further determine how symbol $\sigma_p$ contributes to the linear combination  forming symbol $\sigma_{ij}$ (note that $c_{ij,p}$ can be zero when $x_{ij,p}=1$). Then, we introduce the global mixing vector $\mathbf x_{ij}\triangleq (x_{ij,1},\cdots, x_{ij,p},\cdots, x_{ij,P})\in \{0,1\}^P$  for edge $(i,j)\in \mathcal E$, which relates to the global coding vector $\mathbf c_{ij}= (c_{ij,1},\cdots, c_{ij,p},\cdots,c_{ij,P})\in \mathcal F^P$. Here, $\{0,1\}^P$ represents the set of global mixing vectors, the cardinality of which is $2^P$.

 We consider scalar time-invariant linear network mixing.  In other words, $\beta_{kij}\in \{0,1\}$ and $x_{ij,p}\in \{0,1\}$ are both scalars, and $\beta_{kij}$ and $x_{ij,p}$ do not change over time.



Global mixing vectors provide a natural way of speaking of flows as possibly coded or not without knowledge of the specific values of  global coding vectors. Intuitively,  global mixing vectors can be regarded as a limited  representation  of global coding vectors. Given network mixing vectors, it may not be sufficient to tell  whether a certain symbol can be decoded or not. 
Thus, using the network mixing representation, the extraneous flows, when mixed with the desired flows on the paths to each terminal, are not guaranteed to be cancelled at the terminal.
For decodability to hold at all the terminals, the global mixing vectors at all edges must satisfy the following   feasibility   condition  for scalar linear network mixing.
\begin{Def} [Feasibility of Scalar Linear Network Mixing] For a network $\mathcal G=(\mathcal V, \mathcal E)$ and a set of flows $\mathcal P$ with sources $\mathcal S$  and  terminals $\mathcal T$, a linear network mixing design  $\mathbf x\triangleq (\mathbf x_{ij})_{(i,j)\in\mathcal E}$ 
is called feasible if the following three conditions are satisfied:
 1) $\mathbf x_{s_pj}=\mathbf e_p$ for source edge $(s_p,j)\in \mathcal E$, where $s_p\in \mathcal S$ and $ p\in \mathcal P$; 
2) $\mathbf x_{ij}=\vee_{k\in \mathcal I_i} \beta_{kij}\mathbf x_{ki}$ for edge  $(i,j) \in \mathcal E$ not outgoing from a source, where $i\not\in \mathcal S$ and $\beta_{kij}\in \{0,1\}$;\footnote{Note that $\vee$ denotes the ``or'' operator (logical disjunction).}  
3) $x_{it,p}=0$,  where  $i\in \mathcal I_t$, $p \not\in\mathcal P_t$, $t\in \mathcal T$.

\label{Def:feasibility-mixing}
\end{Def}

Note that Condition 3) in Definition \ref{Def:feasibility-mixing} ensures that for each terminal,  the extraneous flows are not  mixed with the desired flows on the paths to the terminal. In other words,
linear mixing allows only mixing  at intermediate nodes. This is not as general as using linear network coding, which allows mixing and canceling  (i.e., removing one or multiple flows from a mixing of flows) at intermediate nodes.

Given a feasible  linear network mixing design,
one of the ways to implement mixing when $\mathcal F$ is large is to use random linear network coding (RLNC)\cite{Hoetal06,Lun04networkcoding}, 
as discussed in the introduction.   In particular, when $\beta_{kij}=1$, $\alpha_{kij}$ can be randomly, uniformly, and independently chosen in $\mathcal F$ using RLNC; when   $\beta_{kij}=0$,  $\alpha_{kij}$ has to be chosen to be 0.

\section{Mixing Problem Formulation}\label{sec:x-prob-form}

In this section, we formulate  the problem of selecting mixing coefficients  $\boldsymbol \beta$ and  $\mathbf x$ to minimize the cost of the subgraph involved in the coding solution, i.e., the set of edges used in delivering  the flows.

%

In the following formulation, $z_{ij}\in\{0,1\}$  indicates whether edge $(i,j)\in \mathcal E$ is involved in delivering flows, 
and $f_{ij,p}^{t}\in\{0,1\}$ indicates whether edge $(i,j)\in \mathcal E$ is involved in delivering flow $p\in \mathcal P_t$ to terminal $t\in \mathcal T$.  
\begin{Prob} [Mixing]
\begin{align}
U^*(\boldsymbol{\mathcal P})&\triangleq\min_{\mathbf z, \mathbf f, \mathbf x,\boldsymbol{\beta}}\quad  \sum_{(i,j)\in \mathcal E} U_{ij}(z_{ij})\nonumber\\
s.t.\ & z_{ij}\in\{0, 1\}, \  (i,j)\in \mathcal E\label{eqn:z-x-int}\\
&x_{ij,p} \in\mathcal \{0,1\}, \  (i,j)\in \mathcal E,\ p \in \mathcal P\label{eqn:mix-x-int}\\
& \beta_{kij}\in \{0,1\}, \ (k,i), (i,j) \in \mathcal E \label{eqn:mix-beta-int}\\
&f_{ij,p}^{t} \in\{0,1\}, \  (i,j)\in \mathcal E, \ p \in \mathcal P_t, \ t\in \mathcal T\label{eqn:mix-f-int}\\
& \sum_{p\in\mathcal P_t}  f_{ij,p}^{t} \leq z_{ij},  \  (i,j)\in \mathcal E, \ t\in \mathcal T \label{eqn:mix-f-z-int}\\
&\sum_{k\in \mathcal O_i}f_{ik,p}^{t} -\sum_{k\in \mathcal I_i}f_{ki,p}^{t}=\sigma_{i,p}^{t}, \ i \in \mathcal V,  \  p \in \mathcal P_t, \ t\in \mathcal T\label{eqn:mix-f-conv-int}\\
&  f_{ij,p}^{t}\leq x_{ij,p},  \ (i,j)\in \mathcal E, \ p \in \mathcal P_t, \ t\in \mathcal T\label{eqn:mix-f-x-int}\\
& \mathbf x_{s_pj}=\mathbf e_p, \  (s_p,j)\in \mathcal E, \ p\in \mathcal P\label{eqn:f-x-src-int}\\
& \mathbf x_{ij}=\vee_{k\in \mathcal I_i} \beta_{kij}\mathbf x_{ki},   \     (i,j)\in \mathcal E, \ i \not\in\mathcal S\label{eqn:mix-x-inter-int}\\
&x_{it,p}=0, \   i\in \mathcal I_t, \ p \not\in \mathcal P_t, \ t\in \mathcal T\label{eqn:mix-x-dest-int}
\end{align}
where
\begin{align}
\sigma_{i,p}^{t}=
\begin{cases}
1, & i=s_p\\
-1, &i=t\\
0, & \text{otherwise}
\end{cases}\quad  i \in \mathcal V,\ p \in \mathcal P_t, \ t\in \mathcal T.\label{eqn:sigma-1}
\end{align}
 Here, $\mathbf z\triangleq (z_{ij})_{(i,j)\in \mathcal E}$ and $\mathbf f\triangleq (f_{ij,p}^{t})_{(i,j)\in\mathcal E, p\in\mathcal P_t, t\in\mathcal T}$.\footnote{Note that the optimal value in  Problem~\ref{Prob:new-low} is  a function of $\mathcal G$, $\mathcal S$, $\mathcal T$ and $\boldsymbol{\mathcal P}$, which are assumed to be fixed. Here, we write $U^*(\boldsymbol{\mathcal P})$ as a function of $\boldsymbol{\mathcal P}$ only to emphasize the impact of $\boldsymbol{\mathcal P}$ on $U^*(\boldsymbol{\mathcal P})$, which is helpful when considering demand set expansion in Problem~\ref{Prob:new-low-p_t}.}
\label{Prob:new-low}
\end{Prob}

Consider a feasible solution $\mathbf z$, $\mathbf f$, $\mathbf x$ and $\boldsymbol \beta$ to Problem~\ref{Prob:new-low}.
By  \eqref{eqn:mix-f-int} and \eqref{eqn:mix-f-conv-int}, we know that for all $p\in\mathcal P_t$ and $t\in\mathcal T$, all the edges in $\{(i,j)\in \mathcal E: f_{ij,p}^{t}=1\}$ form one  flow path   (i.e., a set of ordered edges $(i,j)\in \mathcal E$ such that $ f_{ij,p}^{t}=1$) from source $s_p$ to terminal $t$. In addition, combining \eqref{eqn:z-x-int} and \eqref{eqn:mix-f-z-int}, we have an equivalent constraint purely in terms of   $\mathbf f$, i.e., 
\begin{align}
& \sum_{p\in\mathcal P_t}  f_{ij,p}^{t}\in \{0,1\},  \quad (i,j)\in \mathcal E, \ t\in \mathcal T. \label{eqn:mix-f-z-c-comb}
\end{align}
 From this,  we know that for all $p,p'\in\mathcal P_t$, $p\neq p'$ and $t\in\mathcal T$, the two flow paths from sources $s_p$ and $s_{p'}$ to terminal $t$ are edge-disjoint. Finally, the feasibility constraints in \eqref{eqn:f-x-src-int}, \eqref{eqn:mix-x-inter-int} and \eqref{eqn:mix-x-dest-int} together with \eqref{eqn:mix-x-int} and \eqref{eqn:mix-beta-int} set other requirements on flow paths (i.e.,  $\mathbf f$) via the constraint in \eqref{eqn:mix-f-x-int}.  Therefore, a feasible solution to Problem~\ref{Prob:new-low} corresponds to a set of flow paths satisfying certain requirements, as illustrated above.
These interpretations can be understood from the following example.
\begin{Exam} [Illustration of Problem~\ref{Prob:new-low}] 
As illustrated in Fig.~\ref{Fig:flow-path}, we consider a network with $\mathcal P=\{1,2\}$, $\mathcal S=\{1,2\}$, $\mathcal T=\{8,7,10\}$, $\mathcal P_1=\{1\}$, $\mathcal P_2=\mathcal P_3=\{1,2\}$, and $U_{ij}(z_{ij})=z_{ij}\in\{0,1\}$ for all $(i,j)\in \mathcal E$. Problem~\ref{Prob:new-low} for the network in Fig.~\ref{Fig:flow-path} has two feasible solutions of network costs 11 and 12, as illustrated in Fig.~\ref{Fig:flow-path} (a) and Fig.~\ref{Fig:flow-path} (b), respectively. Specifically, the two feasible solutions share the same flow paths from sources $s_1$  and $s_2$ to terminals $t_2$ and $t_3$, i.e., flow paths $1-3-4-6-7$, $2-5-7$, $1-3-9-10$, $2-5-4-6-10$.  Notice that the two feasible solutions have different flow paths from source $s_1$ to terminal $t_1$, i.e., flow path $1-3-8$ for the feasible solution in Fig.~\ref{Fig:flow-path} (a) and  flow path $1-3-9-11-8$  for the feasible solution in Fig.~\ref{Fig:flow-path} (b). The optimal solution is the one illustrated in Fig.~\ref{Fig:flow-path} (a) and the optimal network cost is 11.
\end{Exam}


\begin{figure}[t]
\begin{center}
  \subfigure[\small{One feasible solution of network cost 11.}]
  {\resizebox{5.7cm}{!}{\includegraphics{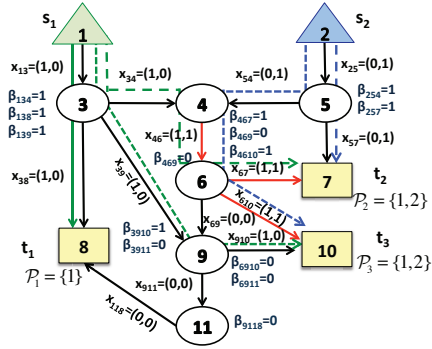}}}
  \subfigure[\small{One feasible solution of network cost 12.}]
  {\resizebox{5.7cm}{!}{\includegraphics{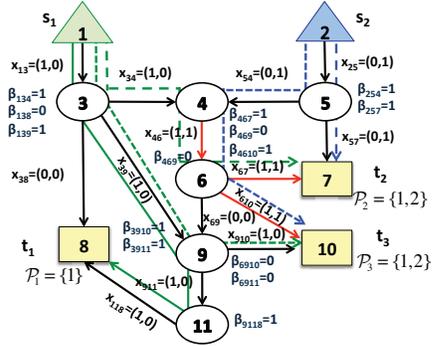}}}
  \end{center}
    \caption{\small{Illustration of feasible solutions to  Problem \ref{Prob:new-low}. $\mathcal P=\{1,2\}$, $\mathcal S=\{1,2\}$, $\mathcal T=\{8, 7,10\}$, $\mathcal P_1=\{1\}$, $\mathcal P_2=\mathcal P_3=\{1,2\}$, $U_{ij}(z_{ij})=z_{ij}\in\{0,1\}$ for all $(i,j)\in \mathcal E$. The flow paths  from the two sources  are illustrated using green and blue curves, respectively. Since  $\mathcal P_2=\mathcal P_3=\{1,2\}$, the flows from $s_1$ to $t_2$ and $s_2$ to $t_3$ are  allowed to be mixed  at edge $(4,6)$. The  red edges carry network-coded information.
In this topology, Problem \ref{Prob:new-low} has two feasible solutions.  However,  neither the two-step mixing approach for general connections in \cite{Lun04networkcoding} nor routing provides a feasible solution.}}
\label{Fig:flow-path}
\end{figure}

 We now illustrate the complexity of Problem~\ref{Prob:new-low}.  The number of variables in $\boldsymbol \beta$ is  $\sum_{(i,j)\in \mathcal E} O_j=\sum_{j\in \mathcal V}I_j O_j\leq D\sum_{j\in \mathcal V}O_j = D E$.   The number of variables in $\mathbf f$ is smaller than or equal to $PTE$.  The numbers of variables in   $\mathbf z$ and  $\mathbf x$ are $E$ and $PE$, respectively. Therefore, the total number of variables in Problem \ref{Prob:new-low} is smaller than or equal to $(D+1)E+(T+1)PE$, i.e., polynomial in $E$, $T$ and $P$. Problem \ref{Prob:new-low} is a binary optimization problem,  and does not appear to have a ready solution.

\begin{Rem} [Problem \ref{Prob:new-low} for Multicast]  When $\mathcal P_t=\mathcal P$ for all $t\in\mathcal T$ (i.e., multicast),  the constraint in \eqref{eqn:mix-x-dest-int} does not exist, and the constraint in \eqref{eqn:mix-f-x-int} is always satisfied by choosing  $\beta_{kij}=1$ for all $(k,i),(i,j)\in \mathcal E$ and choosing  $\mathbf x$ accordingly  by \eqref{eqn:f-x-src-int} and \eqref{eqn:mix-x-inter-int}. Therefore, Problem \ref{Prob:new-low} for general connections reduces to the conventional minimum-cost scalar time-invariant linear network code design problem for the multicast case.  The complexity of the optimization for the multicast case is much lower than that for the general case. This is because  in the optimization for the multicast case, variables $\mathbf x$ and $\boldsymbol \beta$ do not  appear, and
the constraints in \eqref{eqn:mix-x-int}, \eqref{eqn:mix-beta-int}, \eqref{eqn:mix-f-x-int}, \eqref{eqn:f-x-src-int}, \eqref{eqn:mix-x-inter-int} and \eqref{eqn:mix-x-dest-int} can be removed. \label{Rem:spec-case}
\end{Rem}

In the following, we show that a feasible linear network code can be obtained using a feasible solution to Problem \ref{Prob:new-low}  (e.g., using RLNC\cite{Hoetal06}), as illustrated in Section \ref{subsec:mixing}.

\begin{Thm} Suppose Problem \ref{Prob:new-low} is feasible. Then, for each feasible  $\mathbf x$ and  $\boldsymbol \beta$, there exists  a feasible linear network code design  $\boldsymbol \alpha$  and  $\mathbf  c$  with a field size $F>T$ to deliver the desired flows to each terminal.
\label{Thm:feasibility-new-low-opt}
\end{Thm}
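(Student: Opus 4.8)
The plan is to place the question in the algebraic transfer‑matrix framework of Koetter--M\'edard and Ho et al. Fix a feasible tuple $(\mathbf z,\mathbf f,\mathbf x,\boldsymbol\beta)$ for Problem~\ref{Prob:new-low} and restrict attention to the subgraph $\mathcal G_{\mathbf z}$ of edges with $z_{ij}=1$. Following Section~\ref{subsec:mixing}, I would treat each $\alpha_{kij}$ with $\beta_{kij}=1$ as a free indeterminate and set $\alpha_{kij}=0$ whenever $\beta_{kij}=0$; together with $\mathbf c_{s_pj}=\mathbf e_p$, Condition~2) of Definition~\ref{Def:feasibility} then determines every global coding vector $\mathbf c_{ij}(\boldsymbol\alpha)$ as a polynomial in $\boldsymbol\alpha$. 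Since $\mathcal G$ is acyclic, $c_{ij,p}(\boldsymbol\alpha)$ is a sum of monomials indexed by $s_p$-to-$(i,j)$ paths, and --- via Jacobi's identity applied to $(I-F)^{-1}$, whose denominator $\det(I-F)=1$ in the delay‑free case --- every $P_t\times P_t$ minor of the matrix $\mathbf c$ formed by a terminal has degree at most one in each $\alpha_{kij}$. Moreover $c_{ij,p}\equiv 0$ whenever $x_{ij,p}=0$: unrolling \eqref{eqn:f-x-src-int} and \eqref{eqn:mix-x-inter-int} shows that flow $p$ can appear on edge $(i,j)$ only if $x_{ij,p}=1$. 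In particular Condition~3) of Definition~\ref{Def:feasibility-mixing}, i.e.\ \eqref{eqn:mix-x-dest-int}, forces $c_{it,p}\equiv 0$ for all $i\in\mathcal I_t$ and all $p\notin\mathcal P_t$, so the incoming coding vectors at each terminal $t$ are automatically supported on $\mathcal P_t$ and no extraneous flow can contaminate $t$.

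Given this, decodability at $t$ (Condition~3) of Definition~\ref{Def:feasibility}) holds for a particular $\boldsymbol\alpha$ iff the $|\mathcal I_t|\times P_t$ matrix $\mathbf C_t(\boldsymbol\alpha)$ with rows $(\mathbf c_{it})_{i\in\mathcal I_t}$ restricted to the columns $\mathcal P_t$ has rank $P_t$. Suppose I can show that this rank is $P_t$ for \emph{generic} $\boldsymbol\alpha$, i.e.\ that some fixed $P_t\times P_t$ minor $g_t(\boldsymbol\alpha)$ of $\mathbf C_t$ is not the zero polynomial. Then $\Phi(\boldsymbol\alpha)\triangleq\prod_{t\in\mathcal T}g_t(\boldsymbol\alpha)$ is nonzero (a product of nonzero polynomials over the integral domain $\mathcal F[\boldsymbol\alpha]$) and has degree at most $T=|\mathcal T|$ in each $\alpha_{kij}$. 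By the sparse‑zeros lemma (a nonzero polynomial whose degree in each variable is below $F$ has a non‑root), $\Phi$ has a non‑root in $\mathcal F^{|\boldsymbol\alpha|}$ whenever $F>T$; any such assignment makes all terminals decode simultaneously and, with $\mathbf c$ recomputed from it, yields a feasible linear network code in the sense of Definition~\ref{Def:feasibility} that also respects $\boldsymbol\beta$, hence is realizable by RLNC as in Section~\ref{subsec:mixing}. Equivalently one may invoke the Ho et al.\ estimate: random $\boldsymbol\alpha$ over $\mathcal F$ succeeds with probability at least $(1-T/F)^{n}>0$ once $F>T$, where $n$ is the number of edges carrying random coefficients.

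It remains to establish the generic‑rank claim, which is where the combinatorial content of Problem~\ref{Prob:new-low} enters. By \eqref{eqn:mix-f-int} and \eqref{eqn:mix-f-conv-int} each demanded flow $p\in\mathcal P_t$ is carried from $s_p$ to $t$ along a flow path $\mathcal Q^t_p\subseteq\mathcal G_{\mathbf z}$ with $x_{ij,p}=1$ on every edge of it (by \eqref{eqn:mix-f-x-int}), and by \eqref{eqn:mix-f-z-c-comb} the paths $\{\mathcal Q^t_p:p\in\mathcal P_t\}$ are pairwise edge‑disjoint, so their terminal edges are $P_t$ distinct elements of $\mathcal I_t$. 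I would then argue that this certifies a min‑cut of size at least $P_t$ from $\{s_p:p\in\mathcal P_t\}$ to $t$ in the $\boldsymbol\beta$‑constrained coding network --- concretely, that there exist $P_t$ edge‑disjoint ``$\boldsymbol\beta$‑paths'', the $p$-th starting at $s_p$ and using only edge pairs with $\beta_{\cdot}=1$: the edge‑disjoint $\mathcal Q^t_p$ give the routing skeleton, while the recursive relation \eqref{eqn:mix-x-inter-int} (which guarantees that along any edge with $x_{ij,p}=1$ some $\boldsymbol\beta$‑compatible predecessor also carries $p$) lets one repair a non‑$\boldsymbol\beta$‑compatible step by rerouting inside $\{(i,j):x_{ij,p}=1\}$. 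With such $P_t$ edge‑disjoint $\boldsymbol\beta$‑paths in hand, routing $\sigma_p$ unchanged along the $p$-th one (those $\alpha$'s set to $1$, all competing ones to $0$) makes the corresponding $P_t\times P_t$ submatrix of $\mathbf C_t$ equal to the identity, so $g_t\not\equiv 0$; combining over all $t$ completes the proof.

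The main obstacle is exactly the previous step: turning an abstract feasible solution of the binary Problem~\ref{Prob:new-low} into an assignment of coding coefficients that realizes every demanded flow at every terminal while never activating a coefficient forbidden by $\boldsymbol\beta$. The edge‑disjointness in \eqref{eqn:mix-f-z-c-comb} and the support propagation forced by \eqref{eqn:mix-x-inter-int} and \eqref{eqn:mix-x-dest-int} are precisely the hypotheses that make this work, but the argument is a genuine max‑flow/min‑cut (Menger) statement in a line‑graph‑type auxiliary network rather than a one‑line deduction; everything else --- the algebraic set‑up, the reduction to a nonvanishing product of minors, and the per‑variable degree bound that delivers $F>T$ --- is the standard random linear network coding machinery.
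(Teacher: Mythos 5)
Your algebraic superstructure is the same as the paper's Appendix~A: fix $\alpha_{kij}=0$ whenever $\beta_{kij}=0$, treat the remaining local coefficients as indeterminates, show by induction that $c_{ij,p}\equiv 0$ whenever $x_{ij,p}=0$ (so \eqref{eqn:mix-x-dest-int} removes extraneous flows at the terminals), reduce Condition~3) of Definition~\ref{Def:feasibility} to the non-vanishing of a product over terminals of $P_t\times P_t$ determinants built from the flow paths, bound the degree in each variable of that product by $T$, and invoke the sparse-zeros lemma to obtain $F>T$. Where you diverge is the step showing each determinant is not identically zero. The paper routes along the edge-disjoint flow paths $\mathcal L_p^t$, setting $\alpha_{kij}=1$ on consecutive flow-path edges so that $A_t$ becomes the identity; you correctly observe that this assignment is admissible only if $\beta_{kij}=1$ for every consecutive pair of flow-path edges, which the constraints of Problem~\ref{Prob:new-low} do not force (\eqref{eqn:mix-x-inter-int} only guarantees that \emph{some} $\beta$-enabled in-edge carries flow $p$, not the flow-path predecessor), and you propose to close the hole by a ``repair/Menger'' argument producing $P_t$ edge-disjoint $\boldsymbol\beta$-compatible paths. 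That is precisely the step you leave unproved, and it is a genuine gap: it cannot be established at the stated level of generality.

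To see why, take a single terminal $t$ with $\mathcal P_t=\mathcal P=\{1,2\}$ and nodes $s_1,s_2,a,b,m,v,c,d,t$ with edges $(s_1,a),(s_2,b),(a,v),(b,v),(a,m),(b,m),(m,v),(v,c),(v,d),(c,t),(d,t)$. Choose the edge-disjoint flow paths $s_1\!-\!a\!-\!v\!-\!c\!-\!t$ and $s_2\!-\!b\!-\!v\!-\!d\!-\!t$, and set every local mixing coefficient to $1$ except $\beta_{avc}=\beta_{bvc}=\beta_{avd}=\beta_{bvd}=0$, keeping $\beta_{mvc}=\beta_{mvd}=1$. Then $\mathbf x_{mv}=(1,1)$, hence $\mathbf x_{vc}=\mathbf x_{vd}=(1,1)$, and all of \eqref{eqn:z-x-int}--\eqref{eqn:mix-x-dest-int} hold (the last vacuously), so the tuple is feasible for Problem~\ref{Prob:new-low}; yet every $\boldsymbol\beta$-compatible path from either source to $t$ must traverse $(m,v)$, so no two edge-disjoint $\boldsymbol\beta$-paths exist, no rerouting inside $\{(i,j):x_{ij,p}=1\}$ can create them, and for every $\boldsymbol\alpha$ respecting $\boldsymbol\beta$ the two symbols reaching $t$ are scalar multiples of the symbol on $(m,v)$, giving rank $1<P_t$ over every field. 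So your generic-rank claim fails for arbitrary feasible $(\mathbf x,\boldsymbol\beta)$ --- and, read literally, so does the paper's identity-matrix step, which silently assumes $\beta_{kij}=1$ along consecutive flow-path edges. The argument does go through under the additional hypothesis that $\boldsymbol\beta$ dominates the assignment induced from $\mathbf f$ by \eqref{eqn:f-beta}, which is exactly what Algorithm~\ref{alg:new-low-central} and the CFL algorithms construct; with that hypothesis the routing assignment along the $\mathcal L_p^t$ is admissible, $A_t$ is the identity at that point, and the rest of your machinery (degree at most $T$ per variable in the product, $F>T$) completes the proof exactly as in the paper.
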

\begin{proof} Please refer to Appendix A.
\end{proof}


Next, the minimum network cost of Problem \ref{Prob:new-low} is no greater than the minimum costs of  the two-step mixing approach for general connections in \cite{Lun04networkcoding}  and routing for integer flows,  owing to the following reasons. Problem \ref{Prob:new-low} with $\beta_{kij}=1$ for all  $(k,i), (i,j) \in \mathcal E $, instead of \eqref{eqn:mix-beta-int}, is equivalent to the minimum-cost flow rate control  problem  in the second step of the two-step mixing approach for general connections in \cite{Lun04networkcoding}.  Problem \ref{Prob:new-low} with an extra constraint $\sum_{p\in\mathcal P}x_{ij,p}\in\{0,1\}$ for all $(i,j)\in \mathcal E$ is equivalent to the minimum-cost  routing problem.  Fig. \ref{Fig:flow-path} illustrates a feasible solution to Problem  \ref{Prob:new-low} that cannot be obtained by the two-step mixing approach \cite{Lun04networkcoding} or routing. In this example, the minimum network cost of Problem \ref{Prob:new-low} is smaller than those of the two-step mixing  approach\cite{Lun04networkcoding} and  routing (which can be treated as infinity).

When $\mathcal P_t\cap\mathcal P_{t'}=\emptyset$ for all $t\neq t'$ and $t,t'\in\mathcal T$ (e.g., multiple unicasts), Problem \ref{Prob:new-low} for  general connections reduces to the minimum-cost routing problem and cannot take advantage of the network coding gain.  This is because using the network mixing representation, for decodability to hold, the extraneous flows of each terminal are not allowed to be mixed with the terminal's   desired flows on the path to this terminal, thus limiting the network coding gain.  To address this limitation, we now formulate Problem \ref{Prob:new-low-p_t}, which allows the expansion of the demand sets and the optimization over the expansions to increase the opportunity for mixing flows to different terminals.  Let $\overline{\mathcal P_t}$ denote the expanded demand set, which satisfies $\mathcal P_t\subseteq\overline{\mathcal P_t}\subseteq\mathcal P$. Let $\overline{\boldsymbol{\mathcal P}}\triangleq (\overline{\mathcal P_t})_{t\in\mathcal T}$ denote the expanded demand sets of all the terminals.  \begin{Prob}[Mixing  with Demand Set Expansion]
\begin{align}
U^*=\min_{\{\overline{\mathcal P_t}\}} \quad &U^*(\overline{\boldsymbol {\mathcal P}})\label{eqn:opt-p_t}\\
s.t. \quad & \mathcal P_t\subseteq\overline{\mathcal P_t}\subseteq\mathcal P, \quad t\in\mathcal T\label{eqn:const-p_t}
\end{align}
where $U^*(\overline{\boldsymbol {\mathcal P}})$ is the optimal value to Problem \ref{Prob:new-low} for $\overline{\boldsymbol {\mathcal P}}$.\label{Prob:new-low-p_t}
\end{Prob}

The network coding gain improvement of Problem \ref{Prob:new-low-p_t} can be easily understood from  the case of two unicasts over the butterfly network, as illustrated in Fig.~\ref{Fig:butterfly}. By Theorem~\ref{Thm:feasibility-new-low-opt}, we can easily show the following result.
\begin{Cor} Suppose Problem \ref{Prob:new-low-p_t} is feasible. Then, for each feasible solution, there exists  a feasible linear network code design with a field size $F>T$ to deliver the desired flows to each terminal. \label{Cor:feasibility-new-low-opt-Pt}
\end{Cor}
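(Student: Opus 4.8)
The plan is to reduce the statement directly to Theorem~\ref{Thm:feasibility-new-low-opt}, since Problem~\ref{Prob:new-low-p_t} is nothing more than Problem~\ref{Prob:new-low} instantiated on an enlarged demand profile. First I would unpack what a feasible solution of Problem~\ref{Prob:new-low-p_t} is: a choice of expanded demand sets $\{\overline{\mathcal P_t}\}$ satisfying \eqref{eqn:const-p_t}, together with a feasible solution $\mathbf z,\mathbf f,\mathbf x,\boldsymbol\beta$ to Problem~\ref{Prob:new-low} for the profile $\overline{\boldsymbol{\mathcal P}}=(\overline{\mathcal P_t})_{t\in\mathcal T}$ — indeed, $U^*(\overline{\boldsymbol{\mathcal P}})$ being attained means precisely that Problem~\ref{Prob:new-low} is feasible for $\overline{\boldsymbol{\mathcal P}}$. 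I would also note that $\overline{\boldsymbol{\mathcal P}}$ still obeys the standing assumptions on demand profiles: the network $\mathcal G$, the source set $\mathcal S$ and the terminal set $\mathcal T$ are unchanged, each $\overline{\mathcal P_t}$ is a subset of $\mathcal P$, and $\cup_{t\in\mathcal T}\overline{\mathcal P_t}=\mathcal P$ holds because $\mathcal P_t\subseteq\overline{\mathcal P_t}$ and $\cup_{t\in\mathcal T}\mathcal P_t=\mathcal P$.

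Next I would apply Theorem~\ref{Thm:feasibility-new-low-opt} to the pair $(\mathbf x,\boldsymbol\beta)$ viewed as a feasible mixing design for the $\overline{\boldsymbol{\mathcal P}}$-instance of Problem~\ref{Prob:new-low}. This produces a feasible linear network code $\boldsymbol\alpha,\mathbf c$ over a finite field with $F>T$ such that every terminal $t$ can decode all flows in its \emph{expanded} demand set, i.e.\ $\mathbf e_p\in\text{span}\{\mathbf c_{it}:i\in\mathcal I_t\}$ for every $p\in\overline{\mathcal P_t}$. Since $T=|\mathcal T|$ is unaffected by demand-set expansion, no new field-size requirement arises; the bound $F>T$ carries over verbatim.

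Finally, I would invoke the inclusion $\mathcal P_t\subseteq\overline{\mathcal P_t}$: decodability of every $p\in\overline{\mathcal P_t}$ in particular yields decodability of every $p\in\mathcal P_t$, so the very same code $\boldsymbol\alpha,\mathbf c$ delivers the originally desired flows to every terminal, which is the assertion of Corollary~\ref{Cor:feasibility-new-low-opt-Pt}.

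The proof is essentially bookkeeping, so the only point requiring care — and the sole mild obstacle — is verifying that the enlarged instance genuinely satisfies the hypotheses of Theorem~\ref{Thm:feasibility-new-low-opt} (in particular that ``each terminal demands a subset of $\mathcal P$'' and the union-of-demands condition are preserved under expansion, and that feasibility of Problem~\ref{Prob:new-low-p_t} forces feasibility of the underlying Problem~\ref{Prob:new-low} for $\overline{\boldsymbol{\mathcal P}}$). Once that is observed, Theorem~\ref{Thm:feasibility-new-low-opt} applies without modification and the corollary follows immediately.
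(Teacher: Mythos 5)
Your proposal is correct and follows exactly the route the paper intends: the paper's entire justification is ``By Theorem~\ref{Thm:feasibility-new-low-opt}, we can easily show the following result,'' i.e.\ apply the theorem to the $\overline{\boldsymbol{\mathcal P}}$-instance and use $\mathcal P_t\subseteq\overline{\mathcal P_t}$ to conclude. Your additional bookkeeping (checking that the expanded profile still satisfies the standing assumptions and that $F>T$ is unaffected since $\mathcal T$ is unchanged) is exactly the ``easily shown'' part the paper leaves implicit.
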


\begin{figure}[t]
\begin{center}
\includegraphics[height=3.3cm]{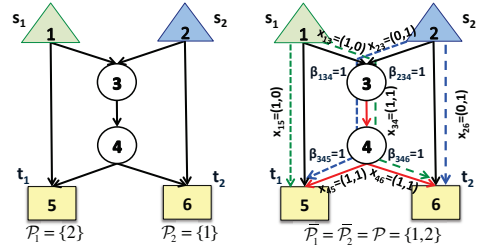}
\caption{\small{Illustration of the network coding gain improvement of Problem \ref{Prob:new-low-p_t} over Problem \ref{Prob:new-low} for two unicasts over the butterfly network.   Problem \ref{Prob:new-low} (left) is not feasible, as the flows from $s_1$ to  $t_2$ and $s_2$  to $t_1$ are not allowed to be mixed over edge $(3,4)$ when $\mathcal P_1\cap\mathcal P_2=\emptyset$. However, Problem \ref{Prob:new-low-p_t} (right) is feasible, as the flows  are allowed to be mixed over edge $(3,4)$ after the demand set expansion to $\mathcal P$.
}}\label{Fig:butterfly}
\end{center}
\end{figure}

By comparing Problem \ref{Prob:new-low} and Problem \ref{Prob:new-low-p_t}, we can  obtain the following lemma.
\begin{Lem}  Suppose  Problem \ref{Prob:new-low} is feasible. Then, Problem \ref{Prob:new-low-p_t} is feasible and
 $U^*\leq U^*(\boldsymbol {\mathcal P})$.
 \label{Lem:low-low-pt}
\end{Lem}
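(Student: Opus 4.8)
The plan is to exhibit a single explicit feasible choice of expanded demand sets for Problem~\ref{Prob:new-low-p_t} and read off both claims from it. The natural candidate is the trivial expansion $\overline{\mathcal P_t}=\mathcal P_t$ for every $t\in\mathcal T$. This clearly satisfies the constraint \eqref{eqn:const-p_t}, since $\mathcal P_t\subseteq\mathcal P_t\subseteq\mathcal P$, so it is an admissible point of the outer minimization in Problem~\ref{Prob:new-low-p_t}.

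First I would observe that for this choice we have $\overline{\boldsymbol{\mathcal P}}=(\overline{\mathcal P_t})_{t\in\mathcal T}=(\mathcal P_t)_{t\in\mathcal T}=\boldsymbol{\mathcal P}$, so that Problem~\ref{Prob:new-low} instantiated at $\overline{\boldsymbol{\mathcal P}}$ is literally Problem~\ref{Prob:new-low} instantiated at $\boldsymbol{\mathcal P}$. By hypothesis the latter is feasible, and since $z_{ij}\in\{0,1\}$ ranges over a finite set and each $U_{ij}$ is (non-decreasing and hence) bounded on $\{0,1\}$, the minimum in Problem~\ref{Prob:new-low} is attained; thus $U^*(\overline{\boldsymbol{\mathcal P}})=U^*(\boldsymbol{\mathcal P})$ is well defined and finite.

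Consequently the feasible region of Problem~\ref{Prob:new-low-p_t} (the set of demand-set tuples satisfying \eqref{eqn:const-p_t} for which $U^*(\overline{\boldsymbol{\mathcal P}})$ is finite) is nonempty, so Problem~\ref{Prob:new-low-p_t} is feasible. Finally, by the definition of $U^*$ as the minimum of $U^*(\overline{\boldsymbol{\mathcal P}})$ over all admissible $\{\overline{\mathcal P_t}\}$, evaluating at the particular point $\overline{\mathcal P_t}=\mathcal P_t$ gives $U^*\leq U^*(\overline{\boldsymbol{\mathcal P}})=U^*(\boldsymbol{\mathcal P})$, which is exactly the asserted inequality.

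There is essentially no hard step here; the only thing to be careful about is bookkeeping around the definitions — namely that $U^*(\overline{\boldsymbol{\mathcal P}})$ is itself an optimal value that could be $+\infty$ when the corresponding instance of Problem~\ref{Prob:new-low} is infeasible, so one must point to the concrete expansion $\overline{\mathcal P_t}=\mathcal P_t$ to guarantee that the outer problem sees at least one finite value. (If one wishes, the remark preceding Problem~\ref{Prob:new-low-p_t} and Fig.~\ref{Fig:butterfly} explain why the expansion is genuinely useful, but that discussion is not needed for the present statement.)
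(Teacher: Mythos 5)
Your proof is correct and is precisely the comparison argument the paper intends (it states the lemma as following ``by comparing Problem~\ref{Prob:new-low} and Problem~\ref{Prob:new-low-p_t}'' without further detail): taking the trivial expansion $\overline{\mathcal P_t}=\mathcal P_t$ shows the original problem is one admissible point of the outer minimization, which immediately yields both feasibility and $U^*\leq U^*(\boldsymbol{\mathcal P})$. Your extra care about $U^*(\overline{\boldsymbol{\mathcal P}})$ possibly being $+\infty$ for other expansions is a reasonable bookkeeping point but not a departure from the paper's approach.
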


%

In the following sections, we focus on solving Problem \ref{Prob:new-low} for given $ \boldsymbol {\mathcal P}$. However, the obtained centralized and distributed algorithm can be easily extended to solve Problem \ref{Prob:new-low-p_t}  by further optimizing over $\overline{\boldsymbol {\mathcal P}}$. Later, in Section \ref{Sec:simulation}, we shall  illustrate the results for    Problem \ref{Prob:new-low} and Problem \ref{Prob:new-low-p_t}   numerically.

\section{Centralized Algorithm}\label{Sec:centr-alg}

In this section, we develop a centralized algorithm to solve Problem \ref{Prob:new-low},  based on the concept of edge-disjoint flow paths discussed before.  The centralized algorithm is conducted at a central node which is aware of  global network information.

For all $p\in \mathcal{P}_t$ and $t\in \mathcal{T}$, let $N_p^t$ denote the number of flow paths from source $s_p$ to terminal $t$. For Problem \ref{Prob:new-low} to be feasible, assume  $N_p^t>0$.
As illustrated in Section~\ref{sec:x-prob-form}, obtaining a feasible solution to Problem~\ref{Prob:new-low} is equivalent to selecting a set of flow paths   satisfying certain requirements.  Thus, we introduce flow path selection variables to indicate the flow paths selected for information transmission.  
 Let $n_p^t\in\{1,\cdots, N_p^t\}$ denote the flow path selection variable  for source $s_p$ and terminal $t$ (i.e., the index of the selected flow path from source $s_p$ to terminal $t$).
 Denote  the flow path selection variables as $\mathbf n\triangleq (n_p^t)_{ p\in \mathcal{P}_t,  t\in \mathcal{T}}$. In the following,  we express  variables  $\mathbf z, \mathbf f, \mathbf x$ and $\boldsymbol{\beta}$ in terms of $\mathbf n$.   To satisfy   \eqref{eqn:mix-f-int} and \eqref{eqn:mix-f-conv-int}, we require $n_p^t$ to take only one value from $\{1,\cdots, N_p^t\}$. Let  $\mathcal L_p^t(n_p^t)$ denote the selected   flow path (i.e., the set of edges on the selected   flow path) from source $s_p$ to terminal $t$. To satisfy \eqref{eqn:z-x-int} and \eqref{eqn:mix-f-z-int} (or equivalently \eqref{eqn:mix-f-z-c-comb}), we require that the $P_t$  flow paths from sources $\{s_p:p\in\mathcal P_t\}$ to terminal $t$ are edge-disjoint, i.e.,
\begin{align}
\mathcal L_p^t(n_p^t)\cap \mathcal L_{p'}^t(n_{p'}^t)=\emptyset,  \ p, p' \in\mathcal P_t,\  p\neq p',\  t\in \mathcal T.
\label{eqn:path-edge-disjoint-const}
\end{align}
Then, variables $\mathbf f(\mathbf n)\triangleq \left(f_{ij,p}^{t}(n_p^t)\right)_{(i,j)\in \mathcal E, p\in\mathcal P_t, t\in\mathcal T}$  can be expressed in terms of   variables $\mathbf n$  as follows:
\begin{align}
f_{ij,p}^{t}(n_p^t)=\begin{cases}
1,& (i,j)\in \mathcal L_p^t(n_p^t)\\
0,& \text{otherwise}
\end{cases}, (i,j)\in \mathcal E,\ p\in \mathcal{P}_t, \ t\in \mathcal{T}.\label{eqn:path-f}
\end{align}
By \eqref{eqn:mix-f-z-int} and the monotonicity of $U_{ij}(\cdot)$, variables $\mathbf z(\mathbf n)\triangleq \left(z_{ij}(\mathbf n)\right)_{(i,j)\in\mathcal E}$ can be chosen based on  $\mathbf f(\mathbf n)$ and expressed implicitly in terms of variables $\mathbf n$ as follows:
\begin{align}
 z_{ij}(\mathbf n)=\max_{t\in \mathcal T}\sum_{p\in\mathcal P_t}  f_{ij,p}^{t}(n_p^t), \ (i,j)\in \mathcal E.\label{eqn:f-z}
 \end{align}
 In addition, to satisfy \eqref{eqn:mix-x-int}, \eqref{eqn:mix-beta-int}, \eqref{eqn:mix-f-x-int} and \eqref{eqn:mix-x-inter-int},  $\boldsymbol \beta(\mathbf n)\triangleq \beta_{kij}\left(\mathbf n\right)_{(k,i), (i,j)\in \mathcal E}$ can be chosen based on  $\mathbf f(\mathbf n)$ and expressed implicitly in terms of variables $\mathbf n$ as follows:
\begin{align}
\beta_{kij}(\mathbf n)=&\begin{cases}
1, &\max_{t\in\mathcal T, p\in\mathcal P_t}f_{ki,p}^t (n_p^t) f_{ij,p}^t(n_p^t)=1\\
0, & \text{otherwise}
\end{cases},  \nonumber\\
&\hspace{40mm} (k,i), (i,j)\in \mathcal E.\label{eqn:f-beta}
\end{align}
Then, based on  $\boldsymbol \beta(\mathbf n)$,  \eqref{eqn:f-x-src-int} and \eqref{eqn:mix-x-inter-int},  $ \mathbf x(\mathbf n)$ can be determined in topological order\footnote{A topological order of a directed graph $\mathcal G=(\mathcal V, \mathcal E)$ is an ordering of its nodes such that for every directed edge $(i,j)\in\mathcal E$ from node $i\in\mathcal V$ to node $j\in\mathcal V$, $i$ comes before $j$ in the ordering. Such an order exists for the edges of any
directed graph $\mathcal G$ that is acyclic.} and expressed implicitly in terms of variables $\mathbf n$. Finally, to satisfy \eqref{eqn:mix-x-dest-int}, we require
\begin{align}
x_{it,p}(\mathbf n)=0, \   i\in \mathcal I_t, \ p \not\in \mathcal P_t,\ t\in \mathcal T. \label{eqn:mix-x-dest-int-path}
\end{align}
Based on the above relationship between the flow path selection variables  and the variables of  Problem \ref{Prob:new-low}, we now describe the procedure of the centralized algorithm, i.e., Algorithm \ref{alg:new-low-central}, which obtains the feasible flow paths of the minimum network cost.

\begin{algorithm}[t!]
\caption{\small{Centralized Algorithm for Problem \ref{Prob:new-low}}}
\small{\begin{algorithmic}[1]
\STATE For all $p\in \mathcal P_t$ and $t\in \mathcal T$, obtain all the flow paths $\{\mathcal L_p^t(n_p^t):n_p^t\in\{1,\cdots, N_p^t\}\}$ from source $s_p$ to terminal $t$,  using depth-first-search (DFS).
\STATE For all $t\in \mathcal T$, obtain the set of $P_t$ edge-disjoint flow paths  $\mathcal L^t=\{(\mathcal L_p^t(n_p^t))_{p\in\mathcal P_t}: n_p^t\in\{1,\cdots, N_p^t\} \ \text{for all $p\in\mathcal P_t$ and \eqref{eqn:path-edge-disjoint-const} is satisfied}\}$ from  sources  $\{s_p:p\in\mathcal P_t\}$ to terminal $t$.
\STATE Calculate the network costs of  $L=\prod_{t\in\mathcal T}L^t$ combinations of $P_t$ edge-disjoint flow paths for all terminal $t\in\mathcal T$, and sort the $L$ combinations in the ascending order of their network costs, where $L^t=|\mathcal L^t|$ and the $l$-th combination is of the $l$-th smallest network cost $U_l$.
\STATE \textbf{initialize}   $l=1$ and $flag=1$.
 \WHILE{$flag=1$}
 \STATE  For all $p\in \mathcal P_t$ and $t\in \mathcal T$, let $n_p^t$ denote the index of the flow path from source $s_p$ to terminal $t$ in the $l$-th combination.
 \STATE  For all $p\in \mathcal P_t$, $t\in \mathcal T$ and $(i,j)\in \mathcal E$, set $f_{ij,p}^t(n_p^t)$ according to \eqref{eqn:path-f}.
 \STATE  For all $(i,j)\in \mathcal E$, set $z_{ij}(\mathbf n)$ according to \eqref{eqn:f-z}.
 \STATE  For all $(k,i), (i,j) \in \mathcal E $, set $\beta_{kij}(\mathbf n)$ according to \eqref{eqn:f-beta}.
 \STATE  Based on $\{\beta_{kij}(\mathbf n)\}$,  \eqref{eqn:f-x-src-int} and \eqref{eqn:mix-x-inter-int},  determine $ \{ x_{ij,p}(\mathbf n)\}$ in the topological order.
 \IF{\text{\eqref{eqn:mix-x-dest-int-path}  is satisfied}}
 \STATE let $U_x^*(\boldsymbol {\mathcal P})=U_{l}$,  $\mathbf z^*=\mathbf z(\mathbf n)$, $\mathbf f^*=\mathbf f(\mathbf n)\}$, $\mathbf x^*=\mathbf x(\mathbf n)\}$, $\boldsymbol \beta^*=\boldsymbol \beta(\mathbf n)$, and set $flag=0$
 \ELSE
 \STATE set $l=l+1$
 \ENDIF
\ENDWHILE
\end{algorithmic}}\label{alg:new-low-central}
\end{algorithm}

 Note that Constraints \eqref{eqn:mix-f-int} and \eqref{eqn:mix-f-conv-int} are guaranteed in Step~1 and Step~8; Constraints \eqref{eqn:z-x-int} and \eqref{eqn:mix-f-z-int} are guaranteed in Step~2 and Step~7; Constraints \eqref{eqn:mix-x-int}, \eqref{eqn:mix-beta-int}, \eqref{eqn:mix-f-x-int} are guaranteed in Step~9; Constraints \eqref{eqn:f-x-src-int} and \eqref{eqn:mix-x-inter-int} are guaranteed in Step~10; and Constraint \eqref{eqn:mix-x-dest-int} is considered in Steps~11--15. Therefore, we can  see that the  optimization to  Problem \ref{Prob:new-low} can be obtained by Algorithm \ref{alg:new-low-central}.

\section{Probabilistic Distributed Algorithms}\label{sec:dist-mixing}
In this section, using recent results for  CSP \cite{cfl}, we develop two probabilistic distributed algorithms  to solve Problem \ref{Prob:new-low}. 

\subsection{Background on Decentralized CSP}


We review some existing results on  CSP in \cite{cfl}.



\begin{Def} [Constraint Satisfaction Problem] \cite{cfl} A CSP consists of $M$ variables $\{\lambda_1,\cdots, \lambda_M\}$  and $K$ clauses $\{\phi_1,\cdots, \phi_K\}$. Each variable $\lambda_m$ takes values in a finite set $\Lambda$, i.e., $\lambda_m\in \Lambda$ for all $m\in \mathcal M\triangleq \{1,\cdots, M\}$. Let $\boldsymbol{\lambda}\triangleq (\lambda_1,\cdots, \lambda_M)\in \Lambda^M$. Each clause  $k\in \mathcal K\triangleq \{1,\cdots, K\}$ is a function $\phi_k:\Lambda^M \to \{0,1\} $,  where for an assignment of variables $\boldsymbol{\lambda}\in \Lambda^M$, $\phi_k(\boldsymbol{\lambda})=1$ if clause $m$ is satisfied and  $\phi_k(\boldsymbol{\lambda})=0$ otherwise. An assignment $\boldsymbol{\lambda}\in \Lambda^M$ is a solution to the CSP if and only if all clauses are simultaneously satisfied
\begin{align}\min_{k\in \mathcal K}\phi_k(\boldsymbol{\lambda})=1.\label{eqn:CSP-cent}
\end{align}
\label{Def:CSP}
\end{Def}

To solve a CSP in a distributed way, clause participation is introduced  in \cite{cfl}.
Let $\boldsymbol{\lambda}_{-m}\triangleq (\lambda_1,\cdots, \lambda_{m-1},\lambda_{m+1},\cdots, \lambda_M)\in\Lambda^{M-1}$.
For each variable $\lambda_m$, let $\mathcal K_m$ denote the set of clause indices  in which it participates, i.e.,
$\mathcal K_m\triangleq \cup_{\boldsymbol{\lambda}_{-m}\in \Lambda^{M-1}}\{c:\min_{\lambda_m\in \Lambda}\phi_k(\lambda_m,\boldsymbol{\lambda}_{-m})=0,
\max_{\lambda_m\in \Lambda}\phi_k(\lambda_m, \boldsymbol{ \lambda}_{-m})=1\}$. 
Thus, we can rewrite the left hand side of \eqref{eqn:CSP-cent} in a way that focuses on the satisfaction of each variable, i.e., $\min_{m\in \mathcal M}\min_{k\in \mathcal K_m}\phi_k(\boldsymbol{\lambda})=1$. 
This form enables us to solve CSPs in a distributed  iterative  way by locally evaluating the clauses in $\mathcal K_m$ and then updating $\lambda_m$.

CSPs are in general  NP-complete and  most effective  CSP solvers are designed for centralized problems. 
The CFL algorithm \cite[Algorithm 1]{cfl}, summarized  in Algorithm~\ref{alg:cfl}, is a distributed iterative algorithm which can find a satisfying assignment to a CSP almost surely in finite time \cite[Corollary 2]{cfl}. 
Note that Algorithm~\ref{alg:cfl} keeps a probability distribution over all possible values of each variable. The value of each variable is selected from this distribution. For each variable, if all the clauses  in which  a variable participates are satisfied with its current value, the associated  probability distribution is updated to ensure that the variable value remains unchanged;  
if at least one clause is unsatisfied, the probability distribution  evolves  by interpolating between it and a distribution that is uniform on all values except   the one that is currently generating dissatisfaction. Therefore, if all variables are simultaneously satisfied in all clauses, the same assignment of values will be reselected  indefinitely with probability 1.

 \begin{algorithm}[htb]
\caption{\small{Communication-Free Learning \cite{cfl}}}
\small{\begin{algorithmic}[1]
    \STATE Initialize $q_m(\lambda)=\frac{1}{|\Lambda|}$ for all $\lambda\in\Lambda$,  where $|\Lambda|$ denotes the cardinality of $\Lambda$.  
    \LOOP
        \STATE Realize a random variable, selecting $\lambda_m=\lambda$ with probability $q_m(\lambda)$.
        \STATE Evaluate $\min_{k\in \mathcal K_m}\phi_k(\boldsymbol{\lambda})$, returning {\em satisfied} if its value is 1 and {\em unsatisfied} otherwise.
        \IF{{\em satisfied}}
        \STATE set $ q_m(\lambda)=\begin{cases} 1, & \text{if $\lambda=\lambda_m$}\\
        0, & \text{otherwise}
        \end{cases}$
        \ELSE
        \STATE set $q_m(\lambda)=\begin{cases} (1-b)q_m(\lambda)+\frac{a}{|\Lambda|-1+a/b}, & \text{if $\lambda=\lambda_m$}\\
        (1-b)q_m(\lambda)+\frac{b}{|\Lambda|-1+a/b}, & \text{otherwise}
        \end{cases}$, 
        where $a,b\in(0,1]$ are design parameters.
        \ENDIF
    \ENDLOOP
\end{algorithmic}}\label{alg:cfl}
\end{algorithm}

\subsection{Path-based Probabilistic Distributed Algorithm}\label{subset:edge-dist}

In this part, we develop a path-based probabilistic distributed algorithm to solve Problem \ref{Prob:new-low}, using recent results in \cite{cfl}. This distributed algorithm is based on the concept of edge-disjoint flow paths discussed before. It can be viewed as a distributed version of the path-based centralized algorithm, i.e., Algorithm~\ref{alg:new-low-central}. For all $p\in \mathcal P_t$, $t\in\mathcal T$ and $n_p^t\in\mathcal \{1,\cdots, N_p^t\}$, this algorithm requires each node on the $n_p^t$-th flow path from source $s_p$ to terminal $t$ to know its neighboring edge on this flow path. Note that it is not necessary for  each node on the $n_p^t$-th flow path to be aware of other edges on the $n_p^t$-th flow path.

First, we construct a path-based CSP corresponding to the feasibility problem obtained from Problem~\ref{Prob:new-low}. Treat $\mathbf n$ as the variables of the path-based CSP, where $n_p^t\in\{1,\cdots, N_p^t\}$ denotes the index of the selected flow path from source $s_p$ to terminal $t$. As illustrated in Section~\ref{Sec:centr-alg},  the constraints of  $\mathbf f$ in  \eqref{eqn:mix-f-int} and \eqref{eqn:mix-f-conv-int} for Problem~\ref{Prob:new-low} can be  taken into account  by  choosing   $n_p^t\in \{1,\cdots, N_p^t\}$, for all $p\in\mathcal P_t$ and $t\in\mathcal T$. The constraints in \eqref{eqn:z-x-int} and \eqref{eqn:mix-f-z-int} (or equivalently \eqref{eqn:mix-f-z-c-comb}) can be  replaced by  the constraint of $\mathbf n$ in  \eqref{eqn:path-edge-disjoint-const}  for the path-based CSP. Variables $\{\beta_{ijk}(\mathbf n)\}$ and $\{x_{ij,p}(\mathbf n)\}$ can be determined for given $\mathbf n$ via \eqref{eqn:path-f}, \eqref{eqn:f-beta}, \eqref{eqn:f-x-src-int} and \eqref{eqn:mix-x-inter-int}. Thus, the last constraint in \eqref{eqn:mix-x-dest-int} of Problem~\ref{Prob:new-low} can be  replaced by  the  constraint of $\mathbf n$ for the path-based CSP in \eqref{eqn:mix-x-dest-int-path}.
Therefore, we can write the clause for $n_p^t$ as follows:
\begin{align}
\phi_p^{n,t}\left( \mathbf n\right)=\begin{cases}
1, &\text{if  \eqref{eqn:path-edge-disjoint-const}  and \eqref{eqn:mix-x-dest-int-path} hold} \\
0, & \text{otherwise}
\end{cases},\ p\in\mathcal P_t, \ t\in\mathcal T.\label{eqn:phi-n}
\end{align}
We thus have the following proposition.\footnote{Note that the clauses of the path-based CSP cannot be further partitioned, as all the variables $\mathbf n$ are coupled in general.}
\begin{Prop}[Path-based CSP]
The path-based CSP with variables $\mathbf n$ ($n_p^t\in\{1,\cdots, N_p^t\}$) and clauses \eqref{eqn:phi-n}  has considered all the constraints in Problem~\ref{Prob:new-low}.
\end{Prop}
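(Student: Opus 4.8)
The plan is to prove the proposition by a constraint‑by‑constraint audit: I would show that once $\mathbf z,\mathbf f,\mathbf x,\boldsymbol\beta$ are reconstructed from an assignment $\mathbf n$ through the maps \eqref{eqn:path-f}, \eqref{eqn:f-z}, \eqref{eqn:f-beta} and the topological‑order construction of $\mathbf x(\mathbf n)$ via \eqref{eqn:f-x-src-int}--\eqref{eqn:mix-x-inter-int}, satisfaction of the clauses \eqref{eqn:phi-n} is equivalent to feasibility of $(\mathbf z(\mathbf n),\mathbf f(\mathbf n),\mathbf x(\mathbf n),\boldsymbol\beta(\mathbf n))$ for Problem~\ref{Prob:new-low}. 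Most of this merely assembles the observations already made when deriving Algorithm~\ref{alg:new-low-central} in Section~\ref{Sec:centr-alg}; the forward direction reduces to a short induction for the coupling constraint \eqref{eqn:mix-f-x-int}, while the reverse inclusion is the part I expect to require the most care.

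First I would clear the constraints that hold by construction. The integrality conditions \eqref{eqn:z-x-int}--\eqref{eqn:mix-f-int} are immediate because \eqref{eqn:path-f}, \eqref{eqn:f-z}, \eqref{eqn:f-beta} return $0/1$ values and the recursion for $\mathbf x(\mathbf n)$ takes only logical disjunctions of $0/1$ values. The conservation law \eqref{eqn:mix-f-conv-int} with boundary values \eqref{eqn:sigma-1} holds because each $\mathcal L_p^t(n_p^t)$ found by depth‑first search in Step~1 of Algorithm~\ref{alg:new-low-central} is a directed $s_p$--$t$ path and the domain $n_p^t\in\{1,\dots,N_p^t\}$ picks exactly one such path per pair $(p,t)$, so \eqref{eqn:path-f} reproduces the node balances. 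Constraint \eqref{eqn:mix-f-z-int} together with \eqref{eqn:z-x-int}, i.e.\ \eqref{eqn:mix-f-z-c-comb}, is precisely the edge‑disjointness clause \eqref{eqn:path-edge-disjoint-const} in \eqref{eqn:phi-n} combined with the definition \eqref{eqn:f-z} of $z_{ij}(\mathbf n)$ as the per‑terminal maximum load --- disjointness forces each $\sum_{p\in\mathcal P_t}f_{ij,p}^t(n_p^t)\le1$, hence $z_{ij}(\mathbf n)\in\{0,1\}$. The recursions \eqref{eqn:f-x-src-int} and \eqref{eqn:mix-x-inter-int} hold because $\mathbf x(\mathbf n)$ is \emph{defined} by them, and \eqref{eqn:mix-x-dest-int} is exactly the second clause \eqref{eqn:mix-x-dest-int-path} in \eqref{eqn:phi-n}.

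The remaining constraint \eqref{eqn:mix-f-x-int}, i.e.\ $f_{ij,p}^{t}(n_p^t)\le x_{ij,p}(\mathbf n)$, I would prove by induction along a topological order of the edges of the selected path $\mathcal L_p^t(n_p^t)$. The base case is the source edge $(s_p,j)$ on that path, where $x_{s_pj,p}(\mathbf n)=1$ by \eqref{eqn:f-x-src-int}. For the inductive step, let $(k,i)$ and $(i,j)$ be consecutive edges of $\mathcal L_p^t(n_p^t)$; then $f_{ki,p}^t(n_p^t)\,f_{ij,p}^t(n_p^t)=1$, so \eqref{eqn:f-beta} gives $\beta_{kij}(\mathbf n)=1$, and \eqref{eqn:mix-x-inter-int} together with the induction hypothesis $x_{ki,p}(\mathbf n)=1$ yields $x_{ij,p}(\mathbf n)\ge\beta_{kij}(\mathbf n)\,x_{ki,p}(\mathbf n)=1$. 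Thus $x_{ij,p}(\mathbf n)=1$ on every edge with $f_{ij,p}^{t}(n_p^t)=1$, which is \eqref{eqn:mix-f-x-int}. Together with the audit above this shows that any $\mathbf n$ satisfying \eqref{eqn:phi-n} produces, via \eqref{eqn:path-f}--\eqref{eqn:mix-x-dest-int-path}, a genuine solution of Problem~\ref{Prob:new-low}, so all constraints of Problem~\ref{Prob:new-low} have indeed been taken into account by the CSP.

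For the reverse inclusion --- that no feasible solution of Problem~\ref{Prob:new-low} is lost --- I would, given a feasible $(\mathbf z,\mathbf f,\mathbf x,\boldsymbol\beta)$, take $n_p^t$ to be the index of the unique path carried by $\{(i,j):f_{ij,p}^{t}=1\}$, which is well defined by \eqref{eqn:mix-f-int} and \eqref{eqn:mix-f-conv-int}, and note that \eqref{eqn:path-edge-disjoint-const} then follows from \eqref{eqn:mix-f-z-c-comb} just as in the discussion after Problem~\ref{Prob:new-low}. The delicate point is checking the clause \eqref{eqn:mix-x-dest-int-path} for this $\mathbf n$: one has to verify that the minimal canonical choice $\boldsymbol\beta(\mathbf n)$ of \eqref{eqn:f-beta}, which activates $\beta_{kij}$ only on consecutive path‑edge pairs, cannot place a flow $p\notin\mathcal P_t$ on an incoming edge of terminal $t$ unless it was already present there in the feasible $\mathbf x$ --- where \eqref{eqn:mix-x-dest-int} forbids it. This comes down to walking a $\boldsymbol\beta(\mathbf n)$‑active chain from $s_p$ to such an edge back through consecutive path‑edge pairs and comparing it, via \eqref{eqn:mix-f-x-int} and \eqref{eqn:mix-x-inter-int}, against the original $\mathbf x$; this bookkeeping, rather than the direct audit, is where the most care would be needed.
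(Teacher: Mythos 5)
Your first two paragraphs establish exactly what the proposition claims, and they follow the same route as the paper: the paper's own justification is the constraint-by-constraint accounting immediately preceding the proposition, resting on the constructions \eqref{eqn:path-f}, \eqref{eqn:f-z}, \eqref{eqn:f-beta} and the topological computation of $\mathbf x(\mathbf n)$ from Section~\ref{Sec:centr-alg} --- \eqref{eqn:mix-f-int} and \eqref{eqn:mix-f-conv-int} absorbed into the domain of $n_p^t$, \eqref{eqn:z-x-int} and \eqref{eqn:mix-f-z-int} replaced by \eqref{eqn:path-edge-disjoint-const}, and \eqref{eqn:mix-x-dest-int} replaced by \eqref{eqn:mix-x-dest-int-path}. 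Your explicit induction along the selected path showing that the choice \eqref{eqn:f-beta} indeed yields \eqref{eqn:mix-f-x-int} is correct and is a useful addition, since the paper only asserts this in Section~\ref{Sec:centr-alg} without spelling out the argument.

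Your third paragraph, however, goes beyond the statement (the proposition only asserts that every constraint of Problem~\ref{Prob:new-low} is enforced by the CSP, i.e., the forward direction), and the lemma you would rely on there is false in general. From $\beta_{kij}(\mathbf n)=1$ you only know that some flow $p'$ has $f_{ki,p'}^{t'}=f_{ij,p'}^{t'}=1$, hence by \eqref{eqn:mix-f-x-int} that $p'$ is present on both $(k,i)$ and $(i,j)$ in the original $\mathbf x$; you do \emph{not} get $\beta_{kij}=1$ in the original solution, which may instead satisfy \eqref{eqn:mix-x-inter-int} by delivering $p'$ to $(i,j)$ through a different incoming edge of $i$ that carries $p'$ but not $q$, while setting $\beta_{kij}=0$ precisely to keep $q$ off $(i,j)$. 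Hence the inductive domination $x_{ij,q}(\mathbf n)\le x_{ij,q}$ breaks down: the canonical $\boldsymbol\beta(\mathbf n)$ can spread a flow $q$ riding with $p'$ on $(k,i)$ onto $(i,j)$ --- possibly an incoming edge of a terminal $t$ with $q\notin\mathcal P_t$ --- even though the original feasible $\mathbf x$ kept it off that edge. So a feasible solution of Problem~\ref{Prob:new-low} need not induce a clause-satisfying $\mathbf n$ under the canonical reconstruction; this does not contradict the proposition (the CSP is at least as restrictive, so all constraints are still ``considered''), but the reverse inclusion cannot be completed by the comparison with the original $\mathbf x$ that you sketch, and if you want a completeness/optimality statement it must be argued differently (e.g., in terms of solutions whose $\boldsymbol\beta$ is the canonical path-based one, as produced by Algorithm~\ref{alg:new-low-central}).
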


Now, we  present  a path-based distributed probabilistic  algorithm, i.e., Algorithm~\ref{alg:path-cfl},  to obtain a feasible solution to the path-based CSP   using CFL \cite[Algorithm 1]{cfl}. 
Based on the convergence result of CFL \cite[Corollary 2]{cfl}, we know that Algorithm~\ref{alg:path-cfl} can find a feasible solution to Problem~\ref{Prob:new-low} in almost surely finite time. Fig.~\ref{Fig:cov-path-cfl-fig3} illustrates the convergence of Algorithm~\ref{alg:path-cfl}. From Fig.~\ref{Fig:cov-path-cfl-fig3}, we can see that Algorithm~\ref{alg:path-cfl} converges to a feasible solution (i.e., the feasible solution illustrated in Fig.~\ref{Fig:flow-path} (a)) to Problem~\ref{Prob:new-low} for the network in Fig.~\ref{Fig:flow-path} quite quickly (within 35 iterations). This feasible solution corresponds to flow paths $1-3-8$, $1-3-4-6-7$, $2-5-7$, $1-3-9-10$ and $2-5-4-6-10$. The network cost of this feasible solution is 11.

 Relying on  Algorithm~\ref{alg:path-cfl}, we present a path-based distributed probabilistic algorithm, Algorithm~\ref{alg:opt-path},  to obtain the optimal solution to Problem~\ref{Prob:new-low} among multiple feasible solutions obtained by Algorithm~\ref{alg:path-cfl}.\footnote{In Step 3 of Algorithm~\ref{alg:opt-path}, the path-based CFL is run for a sufficiently long time. Step 6 of Algorithm~\ref{alg:opt-path} can be implemented  with a master node obtaining the network cost of the path-based CFL  from all nodes or with all nodes computing the average network cost of the path-based CFL  locally via a gossip algorithm.}  
Since  Algorithm~\ref{alg:path-cfl} can find any feasible solution to Problem~\ref{Prob:new-low} with positive probability, $U_l\to U^*(\{\mathcal P_t\})$ almost surely as $l\to \infty$, where $U_l$ denotes  the minimum network cost obtained by the first $l$ path-based CFLs. Fig.~\ref{Fig:opt-path-fig3} illustrates the convergence of Algorithm~\ref{alg:opt-path}. From Fig.~\ref{Fig:opt-path-fig3}, we can see that Algorithm~\ref{alg:opt-path} obtains the optimal network cost 11 to Problem~\ref{Prob:new-low} for the network in Fig.~\ref{Fig:flow-path}  quite quickly (within 5 iterations).

\begin{algorithm}[t!]
\caption{\small{Path-based CFL}}
\label{alg1}
\small{\begin{algorithmic}[1]
\STATE For all $p\in \mathcal P_t$ and $t\in\mathcal T$, obtain all the flow paths from source $s_p$ to terminal $t$, using DFS.
\STATE For all $p\in \mathcal{P}_t$ and $t\in \mathcal{T}$, source $s_p$ initializes
        ${q_p^t}(n)=\frac{1}{N_p^t}$ for all $n\in \{1,...,N_p^t\}$.
 \LOOP
 \STATE For all $p\in \mathcal{P}_t$ and $t\in \mathcal{T}$, source $s_p$ realizes a random
            variable, selecting $n_p^t=n$
            with probability ${q_p^t}(n)$, where $n\in  \{1,...,N_p^t\}$, and  sends  signaling packet $(p,t, n_p^t)$ over edge $(s_p,j)\in \mathcal L_p^t(n_p^t)$ to node $j$. Once  node $j$
            receives signaling packet $(p,t, n_p^t)$ over edge $(i,j)\in \mathcal L_p^t(n_p^t)$ , it forwards this signaling packet to node $k$ over edge $(j,k)\in \mathcal L_p^t(n_p^t)$.
    \STATE For all $(i,j)\in\mathcal{E}$, if edge $(i,j)$ receives signaling  packets  $\{(p,t, n_p^t): p\in\mathcal P_t\}$ to terminal
            $t\in$ $\mathcal{T}$ from more than one source  in $\{s_p:p \in \mathcal{P}_t\}$, it
            sends NAK back to each of these  sources  along its selected path $n_p^t$.
            \STATE  For all $p\in \mathcal P_t$, $t\in \mathcal T$ and $(i,j)\in \mathcal E$,  set $f_{ij,p}^t(n_p^t)$ according to \eqref{eqn:path-f}.
 \STATE  For all $(k,i), (i,j) \in \mathcal E $,  set $\beta_{kij}(\mathbf n)$ according to \eqref{eqn:f-beta}.
 \STATE  Based on  $\boldsymbol \beta(\mathbf n)$,  \eqref{eqn:f-x-src-int} and \eqref{eqn:mix-x-inter-int},  determine $ \mathbf x(\mathbf n)$ in  topological order.
    \STATE  Every  terminal $t\in\mathcal{T}$ checks \eqref{eqn:mix-x-dest-int-path}. For all $i\in \mathcal I_t$ and $p\not\in\mathcal P_t$, if \eqref{eqn:mix-x-dest-int-path}
 is unsatisfied, terminal $t$   sends NAK back to source $s_{p}$ (along any path to source $s_{p}$), and sends
            NAK back to each of the sources in $\mathcal P_t$ (along its selected path).
    \FOR{$p\in\mathcal P_t$ and $t\in\mathcal T$}
        \IF{source $s_p$ receives no NAKs}
        \STATE set $q_p^t(n)=\left\{
                        \begin{array}{ll}
                            1,\mbox{if } n=n_p^t\\
                            0, \mbox{otherwise} &
                        \end{array}
                    \right.$
        \ELSE
        \STATE  set $q_p^t(n)=
                        \begin{cases}
                            (1-b){q_p^t(n)}+\frac{a}{N_p^t - 1 + \frac{a}{b}},\mbox{if } n=n_p^t\\
                            (1-b){q_p^t(n)}+\frac{b}{N_p^t - 1 + \frac{a}{b}}, \mbox{otherwise} &
                        \end{cases}
                    $,  where $a,b$ $\in$ $(0,1]$ are design parameters.
        \ENDIF
    \ENDFOR
    \ENDLOOP
\end{algorithmic}}\label{alg:path-cfl}
\end{algorithm}

\begin{algorithm}[t!]
\caption{\small{Path-based Distributed  Algorithm}}
\label{alg2}
\small{\begin{algorithmic}[1]
    \STATE $l$ = 1 and $U_1$ = $+\infty$.
    \LOOP
        \STATE Run the path-based CFL in Algorithm~\ref{alg:path-cfl} to the path-based CSP corresponding to Problem~\ref{Prob:new-low}.  Let $\mathbf n_l$ denote the feasible solution obtained by Algorithm~\ref{alg:path-cfl} and let  $\Bar{U}_l$ denote the corresponding network cost.
            \IF{$\Bar{U}_l$ $<$ $U_l$}
                \STATE set $U_{l+1}=\Bar{U}_l$,  $\mathbf n^*=\mathbf n_l$, and $l=l+1$.
            \ENDIF
    \ENDLOOP
\end{algorithmic}}\label{alg:opt-path}
\end{algorithm}

\begin{figure}[t!]
\begin{center}
  \subfigure[\small{Flow path from source $s_1$ to terminal $t_1$.}]
  {\resizebox{9cm}{!}{\includegraphics{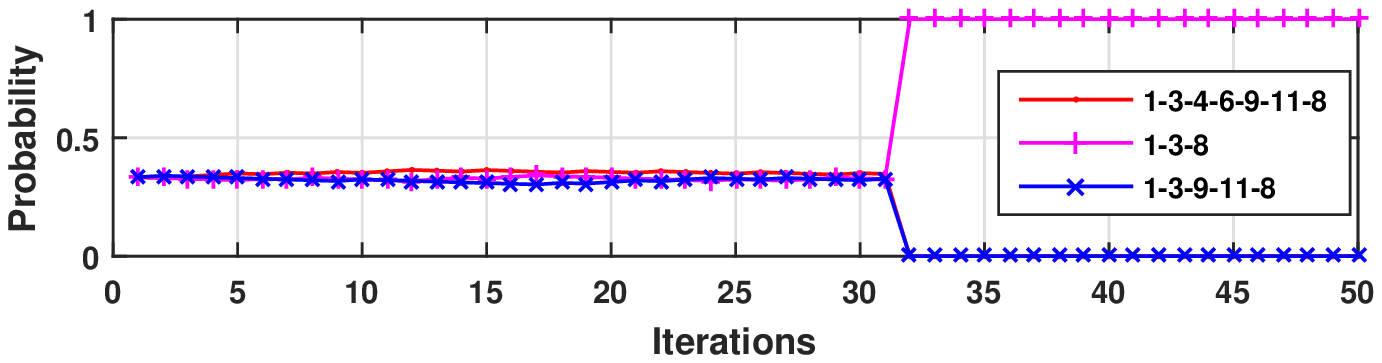}}}
    \subfigure[\small{Flow path from source $s_1$ to terminal $t_2$.}]
  {\resizebox{9cm}{!}{\includegraphics{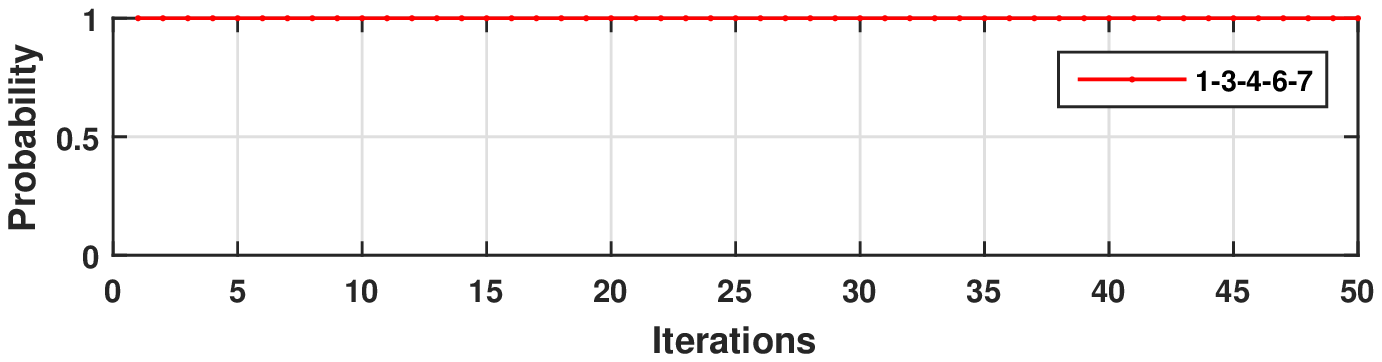}}}
  \subfigure[\small{Flow path from source $s_2$ to terminal $t_2$.}]
  {\resizebox{9cm}{!}{\includegraphics{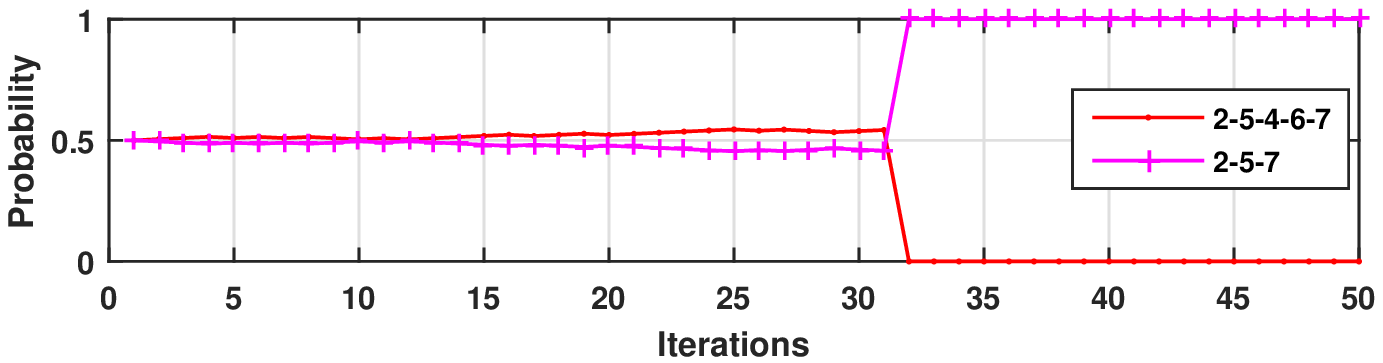}}}
    \subfigure[\small{Flow path from source $s_1$ to terminal $t_3$.}]
  {\resizebox{9cm}{!}{\includegraphics{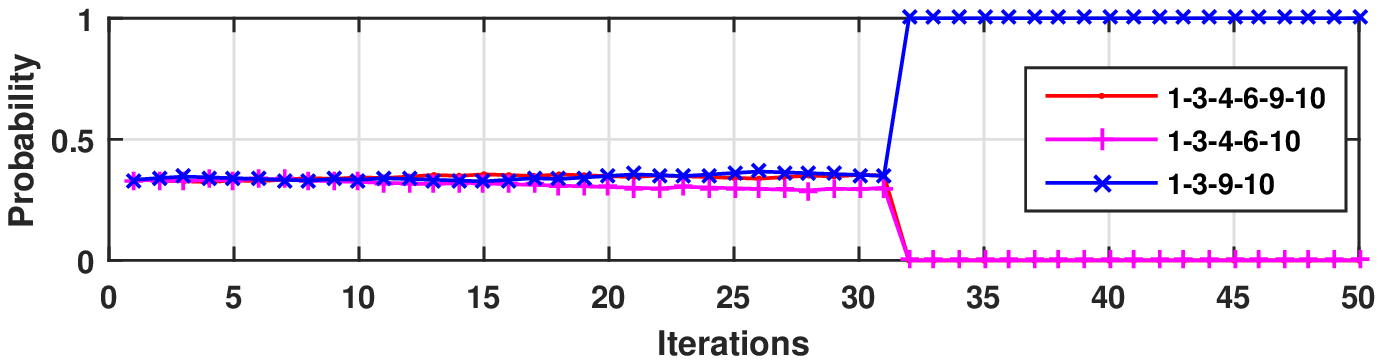}}}
  \subfigure[\small{Flow path from source $s_2$ to terminal $t_3$.}]
  {\resizebox{9cm}{!}{\includegraphics{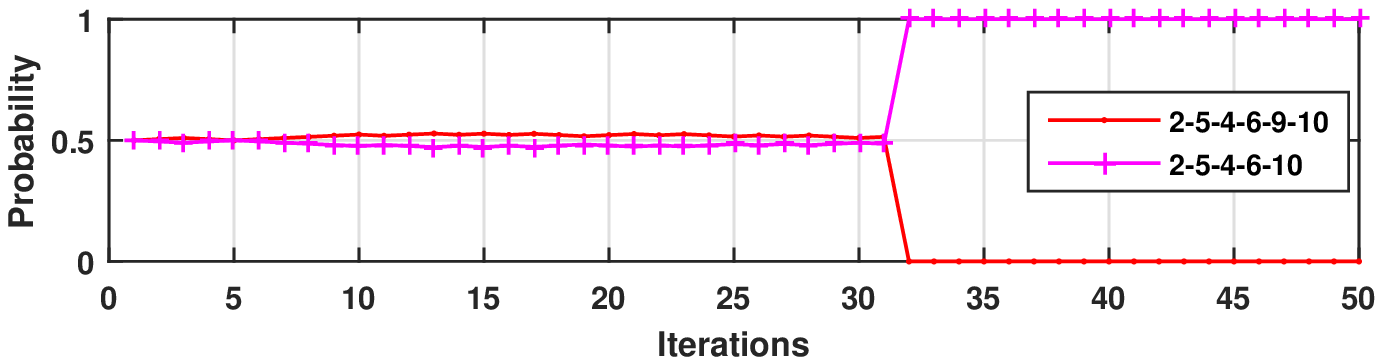}}}
  \end{center}
    \caption{\small{Convergence of the path-based CFL in Algorithm~\ref{alg:path-cfl} for Problem~\ref{Prob:new-low} of the network in Fig.~\ref{Fig:flow-path}. $a=1$ and $b=0.01$.  These convergence curves are for one realization of the random Algorithm~\ref{alg:path-cfl}.} Note that all the flow paths are shown in the figure.}
\label{Fig:cov-path-cfl-fig3}
\end{figure}

\begin{figure}[t!]
\begin{center}
\includegraphics[width=9cm]{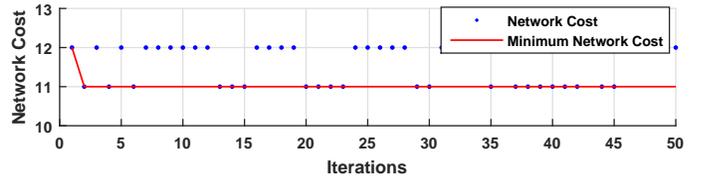}
\caption{\small{Network costs of the path-based CFLs in Algorithm~\ref{alg:opt-path} for Problem~\ref{Prob:new-low} of the network in Fig.~\ref{Fig:flow-path}. Each blue dot represents the network cost of a feasible solution obtained by  the path-based CFL in each iteration of Algorithm~\ref{alg:opt-path}. While the red curve represents the minimum network cost  obtained by Algorithm~\ref{alg:opt-path} within  a certain number of iterations.  The blue dots and red curve are for one realization of the random Algorithm~\ref{alg:opt-path}.}}\label{Fig:opt-path-fig3}
\end{center}
\end{figure}



\subsection{Edge-based Probabilistic Distributed Algorithm}\label{subset:edge-dist}

In this part, we develop an edge-based probabilistic distributed algorithm to solve Problem \ref{Prob:new-low}, using recent results in \cite{cfl}. Compared with the path-based distributed algorithm in Section~\ref{subset:edge-dist},  this edge-based distributed algorithm does not require any path information.

Obtaining a feasible solution to Problem \ref{Prob:new-low}  can be directly treated as a CSP\cite{cfl}.
Specifically,  $\mathbf z,\mathbf f,\mathbf x,\boldsymbol \beta$ and $\{0,1\}$ can be treated as the   variables  and the finite set  of the CSP.  Constraints \eqref{eqn:mix-f-z-int}-\eqref{eqn:mix-x-dest-int} can be treated as the clauses of the CSP. While CSPs are in general NP-complete, several centralized CSP solvers (see references in  \cite{cfl}) and the distributed CSP solver proposed  in \cite{cfl}  can be applied to solve this (na\"{\i}ve) CSP.   However, the direct application of the distributed CSP solver  in \cite{cfl} leads to   high complexity  owing to the large constraint set. In this part, by exploring the features of the constraints in Problem \ref{Prob:new-low}, we  obtain a different  CSP and present  a probabilistic distributed solution with a significantly  reduced number of clauses.

First, we construct a new problem, which we show to be a CSP.  This new problem is better suited than the original problem  to being treated using a probabilistic distributed algorithm  based on the distributed CSP solver presented in  \cite{cfl}. Combining \eqref{eqn:z-x-int} and \eqref{eqn:mix-f-z-int}, we have an equivalent constraint purely in terms of $\mathbf f$, i.e., \eqref{eqn:mix-f-z-c-comb}.
In addition, from \eqref{eqn:mix-x-inter-int}, we have an equivalent constraint purely in terms of  $\mathbf  x$, i.e.,
\begin{align}
\exists \ \beta_{kij}\in \{0,1\} \ \forall   k\in \mathcal I_i,\ & \text{s.t.}\ \mathbf x_{ij}=\vee_{k\in \mathcal I_i} \beta_{kij}\mathbf x_{ki},   \nonumber\\
&   (i,j)\in \mathcal E, \ i \not\in\mathcal S.\label{eqn:mix-x-inter-int-exist}
\end{align}
Therefore, we  can solve only for  variables  $\mathbf f$ and $\mathbf x$ in a distributed way, as
 $\mathbf z$ can be obtained directly from    feasible $\mathbf f$ by choosing $z_{ij}=\max_{t\in \mathcal T}\sum_{p\in\mathcal P_t}  f_{ij,p}^{t}$ according to \eqref{eqn:z-x-int} and \eqref{eqn:mix-f-z-int}, and    $\boldsymbol \beta$ can be obtained from    feasible $\mathbf x$ by \eqref{eqn:f-x-src-int} and \eqref{eqn:mix-x-inter-int}.
We group all the local variables  for each edge $(i,j)\in \mathcal E$ and  introduce the vector variable $(\mathbf f_{ij},\mathbf x_{ij})\in \mathcal Y_{ij}$,
where $\mathbf f_{ij}\triangleq \left( \mathbf f_{ij}^t\right)_{t\in \mathcal T}$,
$\mathbf f_{ij}^{t}\triangleq\left(f_{ij,p}^{t}\right)_{p\in\mathcal P_t}$ and $\mathcal Y_{ij}\triangleq \left\{(\mathbf f_{ij},\mathbf x_{ij}): \eqref{eqn:mix-x-int}, \eqref{eqn:mix-f-int},
\eqref{eqn:mix-f-x-int},\eqref{eqn:f-x-src-int}, \eqref{eqn:mix-x-dest-int}, \eqref{eqn:mix-f-z-c-comb}\right\}$. We also write $\mathcal Y_{ij}=\{\mathbf y_{ij,1},\cdots, \mathbf y_{ij,Y_{ij}}\}$, where $Y_{ij}=|\mathcal Y_{ij}|$.
We now consider a new CSP,  different from the na\"{\i}ve one  that would be directly obtained from Problem~\ref{Prob:new-low}. We treat  $(\mathbf f_{ij},\mathbf x_{ij})$ and $\mathcal Y_{ij}$ as the variable  and   the finite set for edge $(i,j)$ of the CSP.  We write the clauses for $\{(\mathbf f_{ij},\mathbf x_{ij})\}$ as follows:
\begin{align}
&\phi^f_{i}\left( \mathbf f_i \right)=\begin{cases}
1, &\text{if
\eqref{eqn:mix-f-conv-int}  holds $\forall  p\in\mathcal P_t,\ t\in\mathcal T$}\\
0, & \text{otherwise}
\end{cases},\ i\in \mathcal V\label{eqn:phi-f}\\
&\phi^x_{ij}\left( \mathbf x_{ij}, \{\mathbf x_{ki}:k\in\mathcal I_i\} \right) =\begin{cases}
 1, &\text{if  \eqref{eqn:mix-x-inter-int-exist}  holds}\\
0, & \text{otherwise}
\end{cases}, \nonumber\\
&\hspace{50mm}\ (i,j)\in \mathcal E, i\not\in \mathcal S \label{eqn:phi-x}
\end{align}
where $\mathbf f_i\triangleq\left(\mathbf f_{ik}\right)_{k\in\mathcal O_i, k\in\mathcal I_i}$. Note that the local constraints in \eqref{eqn:mix-x-int}, \eqref{eqn:mix-f-int}, 
 \eqref{eqn:mix-f-x-int},   \eqref{eqn:f-x-src-int}, \eqref{eqn:mix-x-dest-int} and \eqref{eqn:mix-f-z-c-comb}  (i.e.,  \eqref{eqn:z-x-int} and \eqref{eqn:mix-f-z-int})  are considered in the finite set $\mathcal Y_{ij}$ of the CSP with respect to each edge $(i,j)\in \mathcal E$. On the other hand, the non-local constraints in \eqref{eqn:mix-f-conv-int} and \eqref{eqn:mix-x-inter-int-exist} are considered in clauses $\phi^f_i$  in \eqref{eqn:phi-f} and  $\phi^x_{ij}$
 in \eqref{eqn:phi-x}, respectively. 
We thus have the following proposition.
 \begin{Prop}[Edge-based CSP] The edge-based CSP with variables $(\mathbf f_{ij},\mathbf x_{ij})\in \mathcal Y_{ij}$, $(i,j)\in\mathcal E$ and clauses \eqref{eqn:phi-f} and \eqref{eqn:phi-x} has considered all the constraints in Problem \ref{Prob:new-low}.
 \end{Prop}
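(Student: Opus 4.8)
The plan is to prove the proposition by establishing that, after the variables $\mathbf z$ and $\boldsymbol\beta$ are projected out, the feasible set of Problem~\ref{Prob:new-low} coincides exactly with the set of assignments $\{(\mathbf f_{ij},\mathbf x_{ij})\}_{(i,j)\in\mathcal E}$ that (i) lie in the per-edge domains $\mathcal Y_{ij}$ and (ii) satisfy every clause $\phi^f_i$ of \eqref{eqn:phi-f} and $\phi^x_{ij}$ of \eqref{eqn:phi-x}. Once this equivalence is in place, the statement that ``all the constraints in Problem~\ref{Prob:new-low} have been considered'' follows at once, and the outer optimization over network cost is recovered by evaluating $\sum_{(i,j)\in\mathcal E}U_{ij}(z_{ij})$ on the reconstructed $\mathbf z$.

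First I would dispose of the auxiliary variables. For any fixed $\mathbf f$, the constraints \eqref{eqn:z-x-int} and \eqref{eqn:mix-f-z-int} admit a feasible $\mathbf z$ if and only if $\sum_{p\in\mathcal P_t}f^t_{ij,p}\in\{0,1\}$ for all $(i,j)\in\mathcal E$, $t\in\mathcal T$, i.e.\ \eqref{eqn:mix-f-z-c-comb} holds; and since the objective is non-decreasing in each $z_{ij}$, one may set $z_{ij}=\max_{t\in\mathcal T}\sum_{p\in\mathcal P_t}f^t_{ij,p}$ without loss. Likewise, for any fixed $\mathbf x$, the constraints \eqref{eqn:mix-beta-int} and \eqref{eqn:mix-x-inter-int} are jointly feasible in $\boldsymbol\beta$ if and only if the existential statement \eqref{eqn:mix-x-inter-int-exist} holds, and when it does a witnessing $\boldsymbol\beta$ is reconstructed edge by edge. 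Hence Problem~\ref{Prob:new-low} is feasible with a given pair $(\mathbf f,\mathbf x)$ exactly when $(\mathbf f,\mathbf x)$ satisfies \eqref{eqn:mix-x-int}, \eqref{eqn:mix-f-int}, \eqref{eqn:mix-f-z-c-comb}, \eqref{eqn:mix-f-conv-int}, \eqref{eqn:mix-f-x-int}, \eqref{eqn:f-x-src-int}, \eqref{eqn:mix-x-inter-int-exist} and \eqref{eqn:mix-x-dest-int}.

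Next I would sort these eight surviving constraints into ``local'' and ``coupling'' ones. The constraints \eqref{eqn:mix-x-int}, \eqref{eqn:mix-f-int}, \eqref{eqn:mix-f-x-int}, \eqref{eqn:f-x-src-int}, \eqref{eqn:mix-x-dest-int} and \eqref{eqn:mix-f-z-c-comb} each involve only the variables $(\mathbf f_{ij},\mathbf x_{ij})$ of a single edge $(i,j)$, and their conjunction is precisely the defining condition of $\mathcal Y_{ij}$, so they are enforced by the domain restriction $(\mathbf f_{ij},\mathbf x_{ij})\in\mathcal Y_{ij}$. The remaining two, flow conservation \eqref{eqn:mix-f-conv-int} and the mixing relation \eqref{eqn:mix-x-inter-int-exist}, couple distinct edges, but only those sharing a node: \eqref{eqn:mix-f-conv-int} at node $i$ involves exactly the edges incident to $i$, which is the argument $\mathbf f_i$ of $\phi^f_i$, while \eqref{eqn:mix-x-inter-int-exist} for a non-source edge $(i,j)$ involves $\mathbf x_{ij}$ together with $\{\mathbf x_{ki}:k\in\mathcal I_i\}$, which is the argument of $\phi^x_{ij}$. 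I would then check that each $\phi^f_i$ and $\phi^x_{ij}$ is a well-defined $\{0,1\}$-valued function of the indicated local variables, so that they are legitimate clauses in the sense of Definition~\ref{Def:CSP}, and that their participating-variable sets are exactly the edges/nodes named above.

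Combining the two reductions, an assignment $\{(\mathbf f_{ij},\mathbf x_{ij})\in\mathcal Y_{ij}\}_{(i,j)\in\mathcal E}$ satisfies all clauses \eqref{eqn:phi-f}--\eqref{eqn:phi-x} if and only if the corresponding $(\mathbf f,\mathbf x)$ together with the reconstructed $\mathbf z$ and $\boldsymbol\beta$ is feasible for Problem~\ref{Prob:new-low}; this is the claim. The step I expect to require the most care is the elimination of $\boldsymbol\beta$: one must verify that \eqref{eqn:mix-beta-int}\,$+$\,\eqref{eqn:mix-x-inter-int} is genuinely equivalent to the existential form \eqref{eqn:mix-x-inter-int-exist}, so that no feasibility is lost or gained when $\boldsymbol\beta$ is projected out, and one must confirm that $\phi^x_{ij}$ truly depends only on the listed local $\mathbf x$-variables (so the CSP stays edge-local). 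Everything else is a bookkeeping check that each numbered constraint of Problem~\ref{Prob:new-low} has been accounted for exactly once, either inside some $\mathcal Y_{ij}$ or in some clause $\phi^f_i$ or $\phi^x_{ij}$.
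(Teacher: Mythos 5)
Your proposal is correct and follows essentially the same route as the paper: project out $\mathbf z$ via \eqref{eqn:mix-f-z-c-comb} and $\boldsymbol\beta$ via the existential form \eqref{eqn:mix-x-inter-int-exist}, absorb the purely edge-local constraints \eqref{eqn:mix-x-int}, \eqref{eqn:mix-f-int}, \eqref{eqn:mix-f-x-int}, \eqref{eqn:f-x-src-int}, \eqref{eqn:mix-x-dest-int}, \eqref{eqn:mix-f-z-c-comb} into the domains $\mathcal Y_{ij}$, and place the two remaining coupling constraints \eqref{eqn:mix-f-conv-int} and \eqref{eqn:mix-x-inter-int-exist} into the clauses $\phi^f_i$ and $\phi^x_{ij}$. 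Your extra care in verifying that the eliminations are exact equivalences (so no feasibility is gained or lost) is a welcome tightening of the paper's informal accounting, but it is not a different argument.
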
 
 
  Note that the number of variables ($E$) and the number of clauses   ($\leq V+E-P$) of the new CSP are much smaller than the number of variables ($\leq (1+D+P+TP)E$) and the number of clauses ($\leq (1+T+TP)E+TPV+TPD$) of the na\"{\i}ve CSP  directly obtained from Problem~\ref{Prob:new-low}.  This  feature  will favor the complexity reduction of  a distributed solution based on  the distributed CSP solver  in \cite{cfl}.

Next, we construct the clause partition. The set of clauses in which variable $(\mathbf f_{ij},\mathbf x_{ij})$ participates is
\begin{align}
\Phi_{ij}=&\left\{\phi^f_{i}, \phi^f_{j}\right\}\cup\left\{\phi^x_{ij},\phi^x_{jk}:i \not\in \mathcal S, k\in\mathcal O_j\right\},\ (i,j)\in \mathcal E.\label{eqn:phi-part}
\end{align} 
Then, the focus can be on the satisfaction of each variable  $(\mathbf f_{ij},\mathbf x_{ij})$, i.e., the satisfaction of each set of clauses $\Phi_{ij}$. Now, the new CSP can be solved using the distributed iterative CFL algorithm \cite[Algorithm 1]{cfl}. Specifically,  each edge $(i,j)\in \mathcal E$ realizes a random variable selecting $(\mathbf f_{ij},\mathbf x_{ij})$. Allow message passing on  $(\mathbf f_{ij},\mathbf x_{ij})$ between adjacent nodes  to evaluate the related clauses. Based on whether the clauses in  \eqref{eqn:phi-part}
 are satisfied or not, the distribution of the random variable of each edge $(i,j)\in \mathcal E$ is updated.   The details are summarized in Algorithm~\ref{alg:edge-cfl},  which obtains a feasible solution to the edge-based CSP using CFL \cite[Algorithm 1]{cfl}.  Based on the convergence 
  result of CFL \cite[Corollary 2]{cfl}, we know that Algorithm~\ref{alg:edge-cfl} can find a feasible solution to Problem~\ref{Prob:new-low} in almost surely finite time.  Fig.~\ref{Fig:cov-edge-cfl-fig3} illustrates the convergence of Algorithm~\ref{alg:edge-cfl}. From Fig.~\ref{Fig:cov-edge-cfl-fig3}, we can see that Algorithm~\ref{alg:edge-cfl} converges to a feasible solution to Problem~\ref{Prob:new-low} for the network in Fig.~\ref{Fig:flow-path} within 5000 iterations. This feasible solution is the same as the one shown in Fig.~\ref{Fig:cov-path-cfl-fig3},  with network cost 11.

 \begin{algorithm}[t!]
\caption{\small{Edge-based CFL}}
\small{\begin{algorithmic}[1]
    \STATE For all $(i,j)\in\mathcal E$, edge $(i,j)$ initializes   $q_{ij}(\mathbf y)=\frac{1}{Y_{ij}}$ for all $\mathbf y \in\mathcal Y_{ij}$.
  \LOOP
        \STATE For all $(i,j)\in\mathcal E$, edge $(i,j)$ realizes a random variable, selecting $(\mathbf f_{ij},\mathbf x_{ij})=\mathbf y$ with probability $q_{ij}(\mathbf y)$, where $\mathbf y\in\mathcal Y_{ij}$.
  \FOR{$(i,j)\in\mathcal E$}
  \STATE Each edge $(i,j)$ evaluates all the clauses in $\Phi_{ij}$.
  \IF{all clauses in $\Phi_{ij}$ are satisfied}
        \STATE set $q_{ij}(\mathbf y)=\begin{cases} 1, & \text{if $\mathbf y=(\mathbf f_{ij},\mathbf x_{ij})$}\\
        0, & \text{otherwise}
        \end{cases}$
        \ELSE
        \STATE set $q_{ij}(\mathbf y)=\begin{cases} (1-b)q_{ij}(\mathbf y)+\frac{a}{Y_{ij}-1+a/b}, & \mathbf y=(\mathbf f_{ij},\mathbf x_{ij})\\
        (1-b)  q_{ij}(\mathbf y)+\frac{b}{Y_{ij}-1+a/b}, & \text{otherwise}
        \end{cases}$, 
        where $a,b\in(0,1]$ are design parameters.
        \ENDIF
                    \ENDFOR
    \ENDLOOP
\end{algorithmic}}\label{alg:edge-cfl}
\end{algorithm}

\begin{algorithm}[t!]
\caption{\small{Edge-based Distributed  Algorithm}}
\small{\begin{algorithmic}[1]
    \STATE $l$ = 1 and $U_1$ = $+\infty$.
    \LOOP
        \STATE Run the edge-based CFL in Algorithm~\ref{alg:edge-cfl} to the edge-based CSP corresponding to Problem~\ref{Prob:new-low}.  Let  $\{(\mathbf f_{ij,l},\mathbf x_{ij,l}):(i,j)\in \mathcal E\}$ denote  the feasible solution obtained by Algorithm~\ref{alg:edge-cfl} and  let $\Bar{U}_l$ denote the corresponding network cost.
            \IF{$\Bar{U}_l$ $<$ $U_l$}
                \STATE set $U_{l+1}=\Bar{U}_l$,  $(\mathbf f_{ij}^*,\mathbf x_{ij}^*)=(\mathbf f_{ij,l},\mathbf x_{ij,l})$ for all $(i,j)\in\mathcal E$, and $l=l+1$.
            \ENDIF
    \ENDLOOP
\end{algorithmic}}\label{alg:opt-edge}
\end{algorithm}

\begin{figure}[t!]
\begin{center}
  \subfigure[\small{Variable $(\mathbf f_{25},\mathbf x_{25})$ for edge $(2,5)$.}]
  {\resizebox{9cm}{!}{\includegraphics{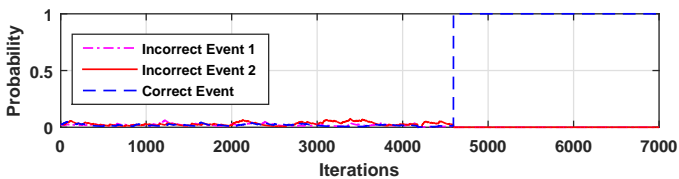}}}
    \subfigure[\small{Variable $(\mathbf f_{57},\mathbf x_{57})$ for edge $(5,7)$.}]
  {\resizebox{9cm}{!}{\includegraphics{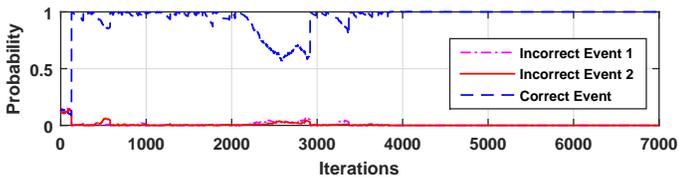}}}
  \subfigure[\small{Variable $(\mathbf f_{67},\mathbf x_{67})$ for edge $(6,7)$.}]
  {\resizebox{9cm}{!}{\includegraphics{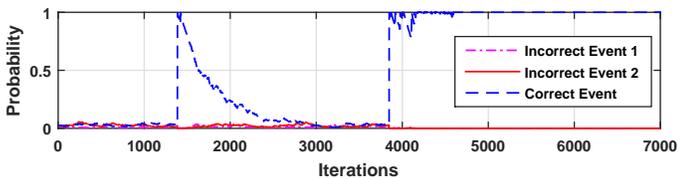}}}
    \end{center}
    \caption{\small{Convergence of the edge-based CFL in Algorithm~\ref{alg:edge-cfl} for Problem~\ref{Prob:new-low} of the network in Fig.~\ref{Fig:flow-path}. $a=1$ and $b=0.01$. Note that for each edge, the ``Correct Event'' indicates the variable taking the value  which corresponds to the feasible solution obtained by the edge-based CFL.  These convergence curves are for one realization of the random Algorithm~\ref{alg:edge-cfl}.}}
\label{Fig:cov-edge-cfl-fig3}
\end{figure}

\begin{figure}[t!]
\begin{center}
\includegraphics[width=9cm]{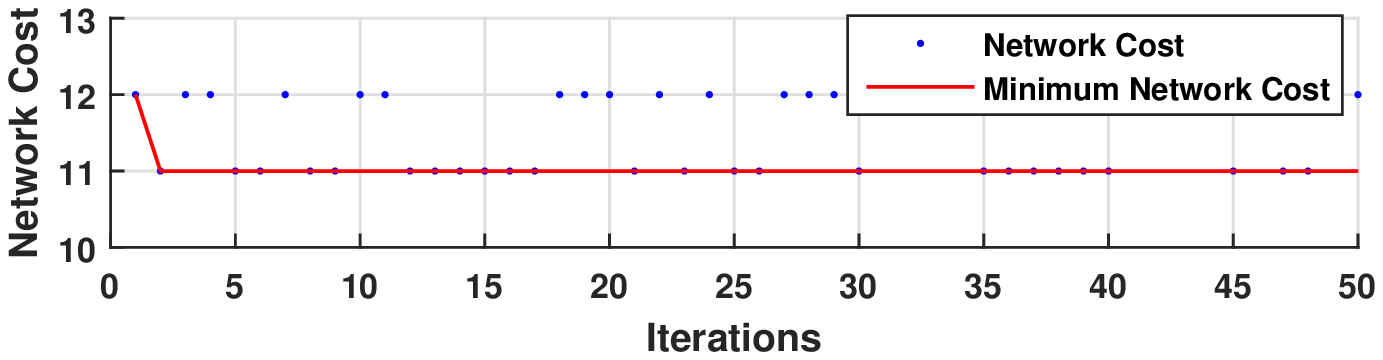}
\caption{\small{Network costs of the edge-based CFLs in Algorithm~\ref{alg:opt-path} for Problem~\ref{Prob:new-low} of the network in Fig.~\ref{Fig:sprint-topo}. Each blue dot represents the network cost of a feasible solution obtained by  the edge-based CFL in each iteration of Algorithm~\ref{alg:opt-path}. While the red curve represents the minimum network cost  obtained by Algorithm~\ref{alg:opt-path} within  a certain number of iterations.  The blue dots and red curve are for one realization of the random Algorithm~\ref{alg:opt-path}.}}\label{Fig:opt-edge-fig3}
\end{center}
\end{figure}

 Relying on  Algorithm~\ref{alg:edge-cfl}, we  present an edge-based distributed probabilistic algorithm, Algorithm~\ref{alg:opt-edge},  to  solve Problem~\ref{Prob:new-low}.\footnote{Note that Step 3 and Step 6 of Algorithm~\ref{alg:opt-edge} can be implemented in similar ways to those in Algorithm~\ref{alg:opt-path}.}   Since  Algorithm~\ref{alg:edge-cfl} can find any feasible solution to Problem~\ref{Prob:new-low} with positive probability,   $U_l\to U^*(\{\mathcal P_t\})$ almost surely as $l\to \infty$, where $U_l$ denotes  the smallest network cost obtained by the first $l$ edge-based CFLs.
Fig.~\ref{Fig:opt-edge-fig3} illustrates the convergence of Algorithm~\ref{alg:opt-edge}. From Fig.~\ref{Fig:opt-edge-fig3}, we can see that Algorithm~\ref{alg:opt-edge} obtains the optimal network cost 11 to Problem~\ref{Prob:new-low} for the network in Fig.~\ref{Fig:flow-path}  quite  quickly (within 5 iterations).

\subsection{Comparison}

In this part, we compare the path-based and edge-based distributed algorithms. In obtaining a feasible solution to Problem~\ref{Prob:new-low}, the path-based CFL, i.e., Algorithm~\ref{alg:path-cfl} and  the edge-based CFL, i.e., Algorithm~\ref{alg:edge-cfl} both base on CFL \cite[Algorithm 1]{cfl}. The convergence result of CFL \cite[Corollary 2]{cfl} guarantee that Algorithm~\ref{alg:path-cfl} and  Algorithm~\ref{alg:edge-cfl} both converge to feasible solutions to Problem~\ref{Prob:new-low} almost surely in finite time. However,  Algorithm~\ref{alg:path-cfl}  converges much faster than Algorithm~\ref{alg:edge-cfl}  in our simulations. This is expected, as Algorithm~\ref{alg:path-cfl}  solves a path-based CSP, while Algorithm~\ref{alg:edge-cfl}    solves an edge-based CSP. The number of variables and the number of possible values for each variable for the path-based CSP are much smaller than those for the edge-based CSP. The difference in the convergence rates of  Algorithm~\ref{alg:path-cfl} and  Algorithm~\ref{alg:edge-cfl}  can be seen by comparing Fig.~\ref{Fig:cov-path-cfl-fig3} and Fig.~\ref{Fig:cov-edge-cfl-fig3}. On the other hand,  Algorithm~\ref{alg:path-cfl}  requires more local information than   Algorithm~\ref{alg:edge-cfl}. In particular, Algorithm~\ref{alg:path-cfl}  requires all the nodes on one path from a source node to a terminal node to be aware of their  neighboring nodes on the path (not all the nodes on the path). Algorithm~\ref{alg:edge-cfl}    instead only requires each node to be aware of its neighboring nodes.

In obtaining  an optimal solution  to Problem~\ref{Prob:new-low} among multiple feasible solutions, the path-based distributed algorithm, i.e., Algorithm~\ref{alg:opt-path} and the edge-based distributed algorithm, i.e., Algorithm~\ref{alg:opt-edge} base on the path-based CFL, i.e., Algorithm~\ref{alg:path-cfl} and  the edge-based CFL, i.e., Algorithm~\ref{alg:edge-cfl}, respectively,  in the same way. Therefore, Algorithm~\ref{alg:opt-path} and Algorithm~\ref{alg:opt-edge} share similar convergence properties. This can be  illustrated in Fig.~\ref{Fig:opt-path-fig3} and Fig.~\ref{Fig:opt-edge-fig3}.

\section{Numerical Illustration}\label{Sec:simulation}

In this section, we numerically  illustrate the  performance of the proposed optimal solutions to Problems \ref{Prob:new-low} and \ref{Prob:new-low-p_t} using mixing only with the two-step mixing approach in \cite{Lun04networkcoding} and optimal routing  for general connections of integer flows.

In the simulation, we consider the Sprint backbone network\cite{sprinttopo:JSAC2011}  as illustrated in Fig. \ref{Fig:sprint-topo}.
We choose sources $\mathcal S=\{8,11\}$ and terminals $\mathcal T\subseteq\{2, 3, 4, 6,9\}$.
The edge directions are chosen to permit connections and help illustrate network coding gain. The green edges have edge cost 1, while the blue edges have edge cost 10 or 20.
The edge costs are chosen to make the network coding advantage  exist at least for some connection requests\cite{KOMT09}. Note that network coding gain takes effect only if transmitting coded information requires a lower network cost   than routing.
We consider 1000 random realizations of demand sets.  For each realization, a pair or triplet of terminals (i.e., $T=2,3$) are  selected from $\{2, 3, 4, 6,9\}$ uniformly at random, and each selected
terminal randomly, uniformly, independently demands a source out of the two sources
in $\mathcal S=\{8,11\}$. 
In addition, each selected terminal randomly chooses to demand the other source or not according to a Bernoulli distribution with  probability $q-1$ of selecting a second source, where $q\in[1,2]$.  Thus, $q$ represents the expected number of sources selected by each terminal (i.e., $P_t$). Note that $q=2$ indicates multicast, and $q=1$ results in unicast connections. In  this way,   general connections are randomly generated with $q$ controlling the average size of the  intersections of the demand sets by different terminals.

\begin{figure}[t!]
\begin{center}
\includegraphics[height=3.7cm, width=7cm]{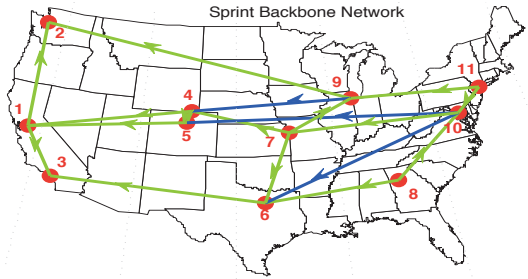}
\caption{\small{Sprint backbone network topology\cite{sprinttopo:JSAC2011}. $\mathcal S=\{8,11\}$ and $\mathcal T\subseteq\{2, 3, 4, 6,9\}$. The edge costs  are: 20 for edges $(10,5)$ and $(10,6)$, 10 for edge $(9,4)$, and 1 for all the other edges.}}\label{Fig:sprint-topo}
\end{center}
\end{figure}


\subsection{Network Cost}
\begin{table}[t!]
\begin{center}
\begin{tabular}{|c|c|c|c|c|}
\hline
& \multicolumn{2}{|c|}{T=2} & \multicolumn{2}{|c|}{T=3} \\
\hline
& q=1.2 & q=1.8 & q=1.2 & q=1.8\\
\hline
{\bfseries{Problem 2}} & 7.49 & 12.70 &12.82 & 20.79 \\
\hline
{\bfseries{Problem 1}} & 8.99 & 13.80 & 16.96&  23.20\\
\hline
{\bfseries{Two-step Mixing}\cite{Lun04networkcoding}} & 8.99 & 13.80 & 17.24 &  24.94\\
\hline
{\bfseries{Routing}} & 9.36 & 18.68 & 17.25 & 32.34 \\
\hline
\end{tabular}
\end{center}
\caption{\small{Average optimal network cost of the network in Fig.~\ref{Fig:sprint-topo}.}}\label{table:cost}
\end{table}

Table. \ref{table:cost} illustrates the average optimal network cost (averaged over 1000 random realizations) for different $q$ and $T$. Note that the optimal network costs of Problems \ref{Prob:new-low} and  \ref{Prob:new-low-p_t} are obtained by the centralized algorithm, i.e., Algorithm~\ref{alg:new-low-central}. We can observe that the average  optimal  network costs of all the schemes increase with increases of $q$  or $T$, i.e., the increase of network load.
The average network costs of the optimal solutions to Problems \ref{Prob:new-low} and  \ref{Prob:new-low-p_t} are lower than the optimal routing, with average cost reductions up to  $28\%$ and $36\%$, respectively.  The average cost reductions are due to the  network coding gain exploited by Problems \ref{Prob:new-low} and  \ref{Prob:new-low-p_t}.
Specifically,  edge $(10,7)$  can serve as the coding edge for the butterfly subnetwork consisting of nodes 6, 7, 8, 9, 10 and 11,  and 
edges $(7,4)$ and $(4,1)$ can serve as the coding edge for the butterfly subnetwork consisting of nodes 1, 2, 3, 4, 6, 7 and 9, 
in  the Sprint backbone network in Fig.~\ref{Fig:sprint-topo}. The network coding gain increases as $q$ or $T$ increases. This is because, using network coding, edges can be used more efficiently in the case of high network load.

In addition,  the average network costs of the optimal solutions to Problems \ref{Prob:new-low} and  \ref{Prob:new-low-p_t} are lower than the two-step mixing approach, with average cost reductions up to $7\%$ and $26\%$, respectively.  The average cost reductions are due to the extra network coding gain (achieved through mixing) exploited by Problems \ref{Prob:new-low} and  \ref{Prob:new-low-p_t}. Specifically, given the demand sets of all the terminals, mixing or not in the two-step mixing approach (determined in the first step, separately from the second flow rate control step)  is restricted by all the physical paths, while mixing or not in Problems \ref{Prob:new-low} and  \ref{Prob:new-low-p_t} (determined jointly with flow rate control) is only restricted by the actual paths that each flow will take, which is also illustrated in the example in Fig. \ref{Fig:flow-path}.  Note that the average network cost  of Problem \ref{Prob:new-low} is lower than the two-step mixing when $T=3$. The average cost reductions of the optimal solutions to Problems \ref{Prob:new-low} and  \ref{Prob:new-low-p_t}  increase as $T$ increases, as there are more physical paths to terminals restricting network coding (mixing) in the two-step mixing approach.

On the other hand, the average network cost of the  optimal solution to Problem \ref{Prob:new-low-p_t} is  lower than that of the optimal solution to  Problem \ref{Prob:new-low}, with average cost reduction up to $24\%$,  illustrating the consequence of  Lemma \ref{Lem:low-low-pt}.
For a given $T$, the performance gain of Problem \ref{Prob:new-low-p_t} over Problem \ref{Prob:new-low} decreases as $q$ increases, since the difference between the feasibility regions of the two problems reduces with the increase of $q$. Note that when $q=2$ (i.e., multicast),  the two problems (feasibility regions) are the same.  However, for a given $q$, the performance gain of Problem \ref{Prob:new-low-p_t} over Problem \ref{Prob:new-low} increases as $T$ increases, since the difference between the feasibility regions of the two problems increases with the increase of $T$. 


\subsection{Convergence}

We  illustrate the convergence performance of own distributed Algorithm \ref{alg:opt-path}. Consider $s_1=8$, $s_2=11$, $t_1=2$, $t_2=6$, $\mathcal P_1=\{1,2\}$, $\mathcal P_2=\{2\}$ and $\mathcal P=\{1,2\}$. In this case, the optimal network costs of Problem \ref{Prob:new-low-p_t}, Problem \ref{Prob:new-low}, the two-step mixing approach  and  routing are 10, 28, 28, 28, respectively.  The optimal network mixing  solution  to Problem \ref{Prob:new-low-p_t} is achieved through the demand set expansion, i.e., $\bar{\mathcal P}_1=\bar{\mathcal P}_2=\mathcal P=\{1,2\}$. The expanded demand set corresponds to multicast, where the network coding gain is  achieved. The optimal mixing (coding) solutions with cost 10 corresponds to flow paths $8-10-7-4-1-2$, $11-10-7-9-2$ ($11-9-2$), $8-6$ and $11-10-7-6$.
There is no network mixing (coding) solution to Problem \ref{Prob:new-low} and the two-step mixing approach.  There are three optimal routing solutions of cost 28, which are also optimal (feasible but non-coding) solutions for Problem~\ref{Prob:new-low} and the two-step mixing approach.  The first one corresponds to flow paths $8-10-5-1-2$, $11-10-7-9-2$ and  $11-10-7-6$. The second one corresponds to flow paths $8-10-7-4-1-2$, $11-9-2$ and $11-10-6$. The third one corresponds to flow paths $8-10-5-1-2$, $11-9-2$ and $11-10-7-6$.   In the following, we illustrate the convergence for the path-based and edge-based distributed algorithms for Problem \ref{Prob:new-low-p_t}  (at $\bar{\mathcal P}_1=\bar{\mathcal P}_2=\mathcal P=\{1,2\}$), respectively.

\subsubsection{Path-based Probabilistic  Distributed Algorithm}

\begin{figure}[t!]
\begin{center}
  \subfigure[\small{Flow path from (source) node 8 to (terminal) node 2.}]
  {\resizebox{9cm}{!}{\includegraphics{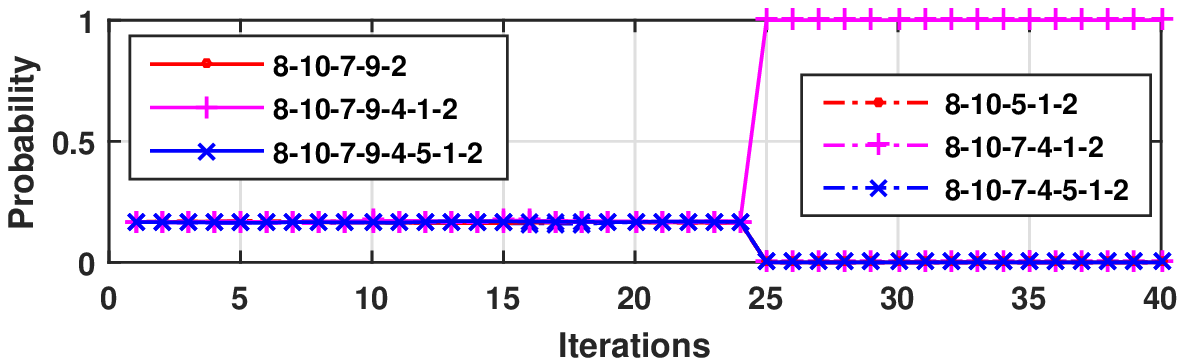}}}
    \subfigure[\small{Flow path from (source) node 11 to (terminal) node 2.}]
  {\resizebox{9cm}{!}{\includegraphics{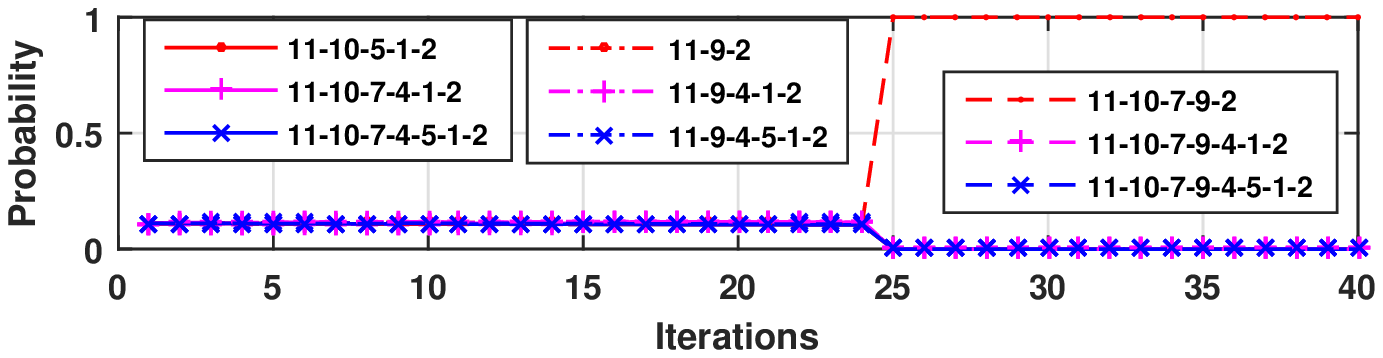}}}
  \subfigure[\small{Flow path from (source) node 8 to   (terminal) node 6.}]
  {\resizebox{9cm}{!}{\includegraphics{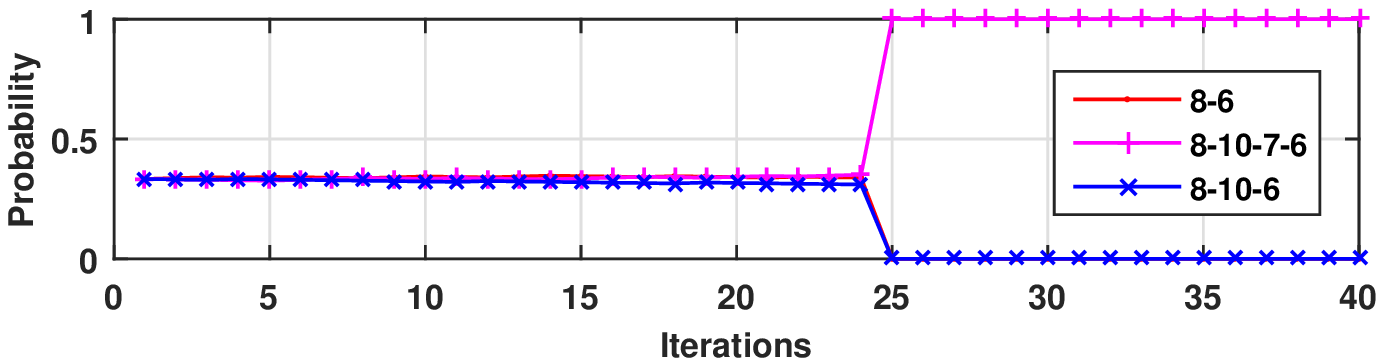}}}
    \subfigure[\small{Flow path from (source) node 11 to   (terminal) node 6.}]
  {\resizebox{9cm}{!}{\includegraphics{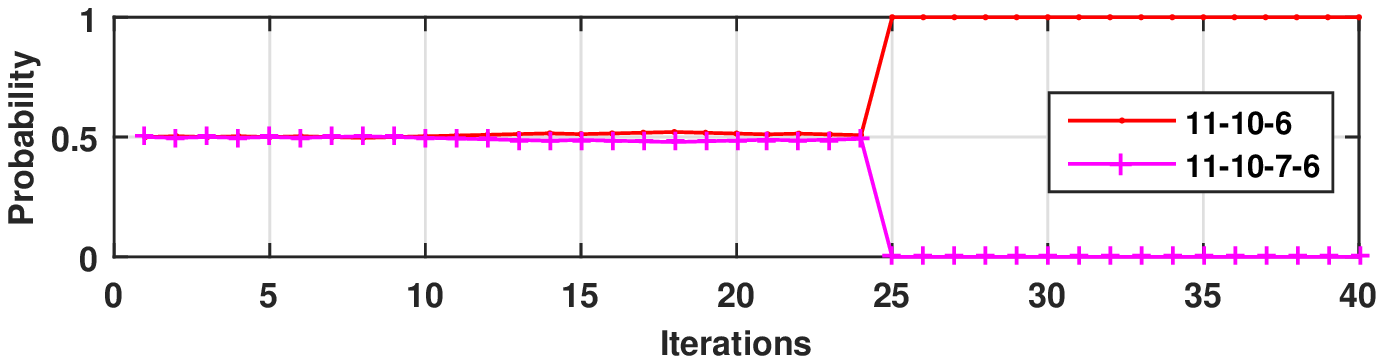}}}
  \end{center}
    \caption{\small{Convergence of the path-based CFL in Algorithm~\ref{alg:path-cfl} for Problem~\ref{Prob:new-low-p_t} of the network in Fig.~\ref{Fig:sprint-topo}. $a = 0.05$ and  $b = 0.009$.  These convergence curves are for one realization of the random Algorithm~\ref{alg:path-cfl}.}  Note that all the flow paths are shown in the figure.}
\label{Fig:cov-path-cfl-sprint}
\end{figure}

\begin{figure}[t!]
\begin{center}
\includegraphics[width=8cm]{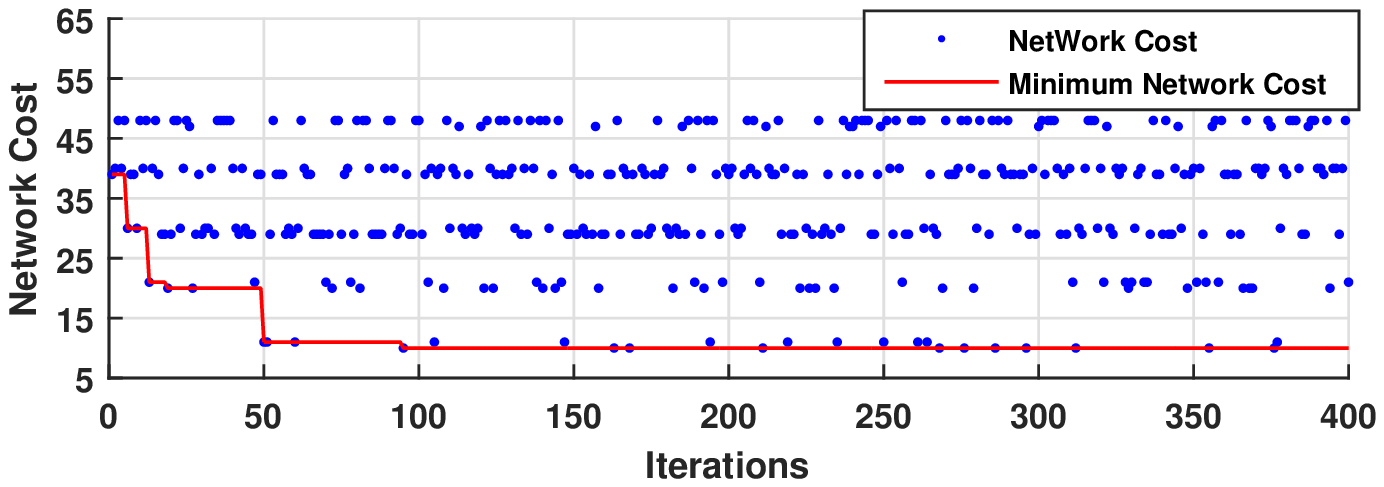}
\caption{\small{Network costs of Algorithm~\ref{alg:opt-path} for Problem~\ref{Prob:new-low-p_t} of the network in Fig.~\ref{Fig:sprint-topo}. Each blue dot represents the network cost of a feasible solution obtained by  the path-based CFL in each iteration of Algorithm~\ref{alg:opt-path}. While the red curve represents the minimum network cost  obtained by Algorithm~\ref{alg:opt-path} within  a certain number of iterations. The blue dots and red curve are  for one realization of the random Algorithm~\ref{alg:opt-path}.}}\label{Fig:opt-path-sprint}
\end{center}
\end{figure}

\begin{figure}[t!]
\begin{center}
\includegraphics[width=7cm]{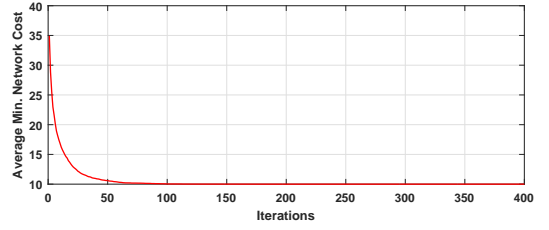}
\caption{\small{Average minimum network costs of the path-based CFLs in Algorithm~\ref{alg:opt-path} for Problem~\ref{Prob:new-low-p_t} of the network in Fig.~\ref{Fig:sprint-topo} over 1000 instances. The red curve here represents the average of the red curves in Fig.~\ref{Fig:opt-path-sprint} over 1000 instances.}}\label{Fig:opt-path-sprint-ave}
\end{center}
\end{figure}

Fig.~\ref{Fig:cov-path-cfl-sprint} illustrates the convergence of Algorithm~\ref{alg:path-cfl} (i.e., Step 3 in  Algorithm~\ref{alg:opt-path}). From Fig.~\ref{Fig:cov-path-cfl-sprint}, we can see that Algorithm~\ref{alg:path-cfl} converges to a feasible solution to Problem \ref{Prob:new-low-p_t}  quite quickly (within 25 iterations). This feasible solution corresponds to flow paths $8-10-7-4-1-2$, $11-10-7-9-2$, $8-10-7-6$ and $11-10-6$ . The network cost of this feasible solution is 10, i.e., the optimal network cost to Problem \ref{Prob:new-low-p_t}. 
Fig.~\ref{Fig:opt-path-sprint} illustrates the convergence of Algorithm~\ref{alg:opt-path} for one  instance. We can see that there exist multiple feasible mixing solutions to Problem~\ref{Prob:new-low-p_t}, which are of different network costs, and  running  Algorithm~\ref{alg:path-cfl}   for multiple times can result in different feasible solutions. Thus, the minimum network cost   may decrease as the number of  iterations   increases.    Algorithm~\ref{alg:opt-path} obtains the optimal network cost 10 to Problem~\ref{Prob:new-low-p_t}    quite  quickly (within 100 iterations). Fig.~\ref{Fig:opt-path-sprint-ave} illustrates the average convergence of Algorithm~\ref{alg:opt-path} over 1000 instances. We can see that on average, within 100 iterations, the minimum network cost under Algorithm~\ref{alg:opt-path} converges to 10, which is the optimal network cost to Problem \ref{Prob:new-low-p_t} obtained by the centralized algorithm in Algorithm~\ref{alg:new-low-central}.

\subsubsection{Edge-based Probabilistic  Distributed Algorithm}

\begin{figure}[t!]
\begin{center}
  \subfigure[\small{Variable $(\mathbf f_{810},\mathbf x_{810})$ for edge $(8,10)$.}]
  {\resizebox{9cm}{!}{\includegraphics{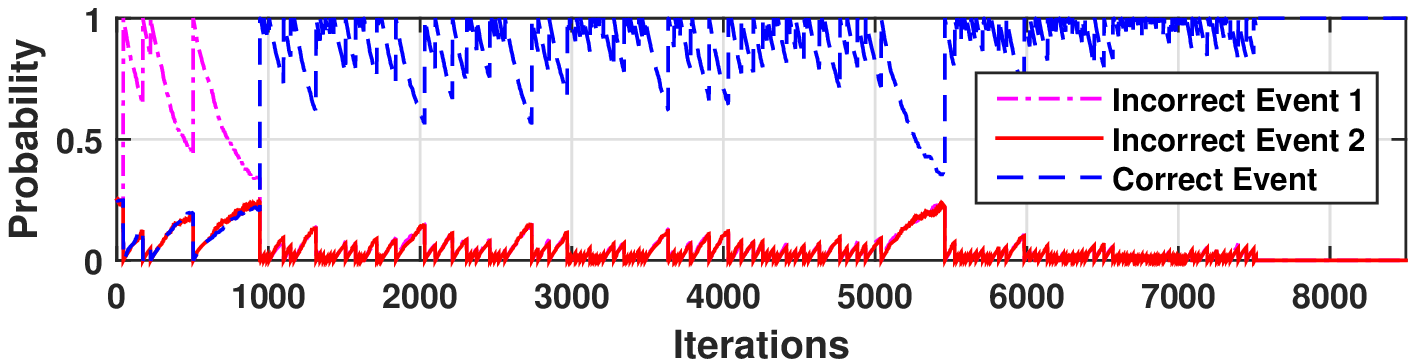}}}
    \subfigure[\small{Variable $(\mathbf f_{1110},\mathbf x_{1110})$ for edge $(11,10)$.}]
  {\resizebox{9cm}{!}{\includegraphics{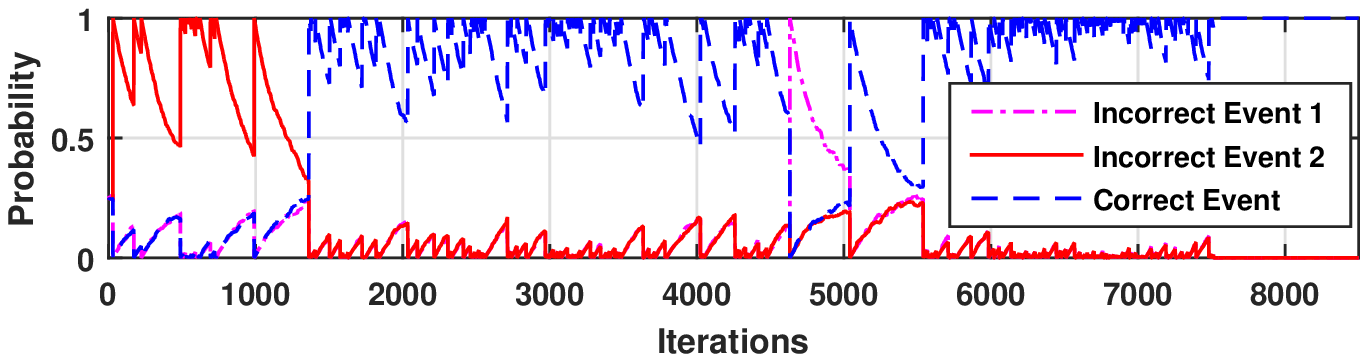}}}
    \subfigure[\small{Variable $(\mathbf f_{12},\mathbf x_{12})$ for edge $(1,2)$.}]
  {\resizebox{9cm}{!}{\includegraphics{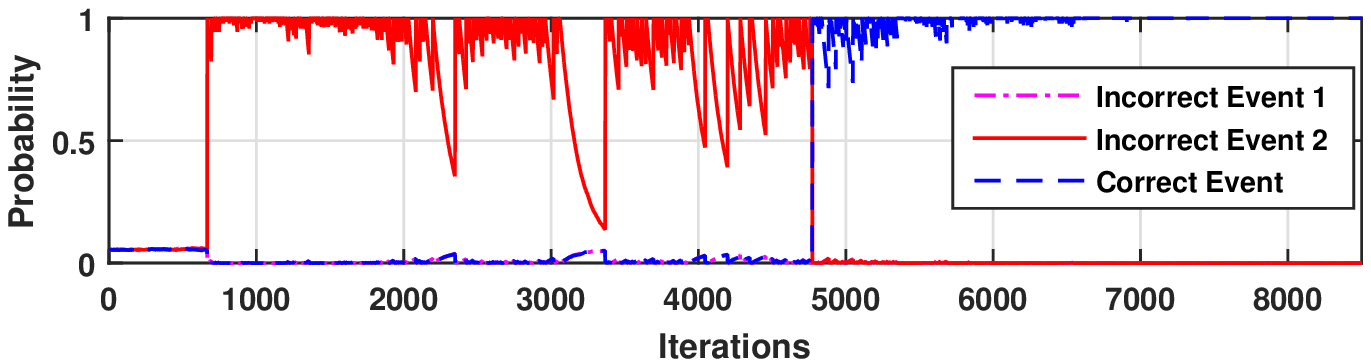}}}
    \end{center}
    \caption{\small{Convergence of the edge-based CFL in Algorithm~\ref{alg:edge-cfl} for Problem~\ref{Prob:new-low-p_t} of the network in Fig.~\ref{Fig:sprint-topo}. $a=1$ and $b=0.01$. Note that for each edge, the ``Correct Event'' indicates the variable taking the value which corresponds to the feasible solution obtained by the edge-based CFL.  These convergence curves are for one realization of the random Algorithm~\ref{alg:edge-cfl}.}}
\label{Fig:cov-edge-cfl-sprint}
\end{figure}

\begin{figure}[t!]
\begin{center}
\includegraphics[width=8cm]{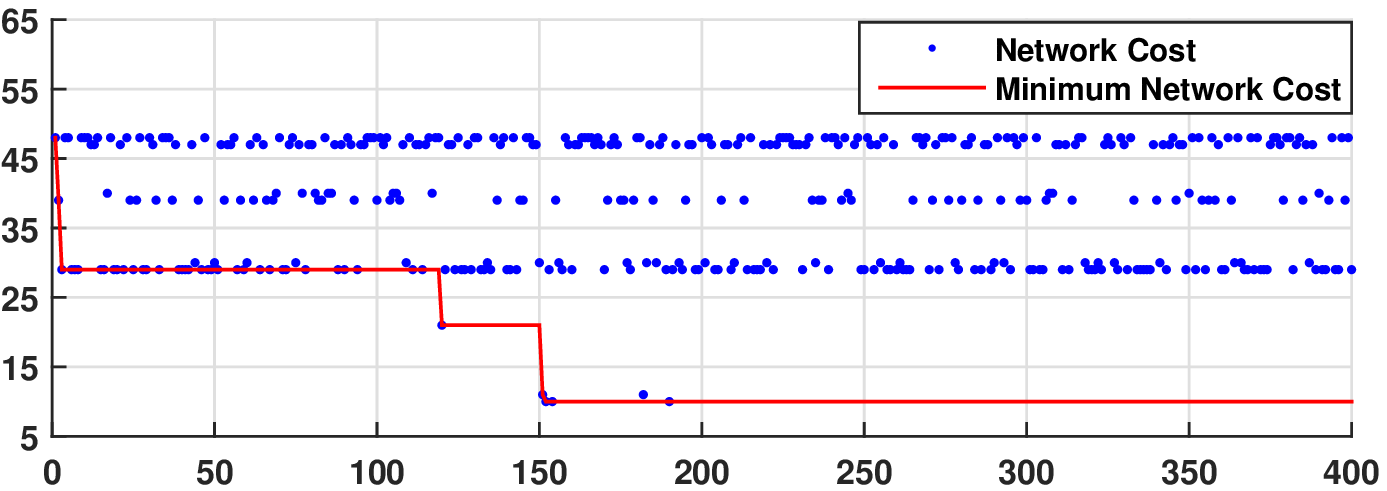}
\caption{\small{Network costs of Algorithm~\ref{alg:opt-edge} for Problem~\ref{Prob:new-low-p_t} of the network in Fig.~\ref{Fig:sprint-topo}. Each blue dot represents the network cost of a feasible solution obtained by  the edge-based CFL in each iteration of Algorithm~\ref{alg:opt-edge}. While the red curve represents the minimum network cost  obtained by Algorithm~\ref{alg:opt-edge} within  a certain number of iterations.  The blue dots and red curve are  for one realization of the random Algorithm~\ref{alg:opt-edge}.}}\label{Fig:opt-edge-sprint}
\end{center}
\end{figure}

\begin{figure}[t!]
\begin{center}
\includegraphics[width=7cm]{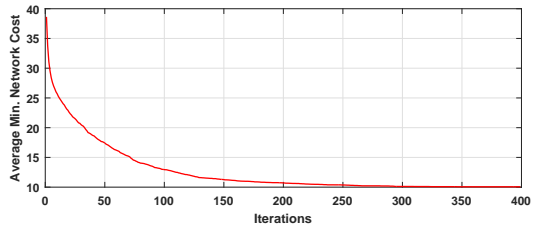}
\caption{\small{Average minimum network costs of the edge-based CFLs in Algorithm~\ref{alg:opt-edge} for Problem~\ref{Prob:new-low-p_t} of the network in Fig.~\ref{Fig:sprint-topo} over 1000 instances. The red curve here represents the average of the red curves in Fig.~\ref{Fig:opt-edge-sprint} over 1000 instances.}}\label{Fig:opt-edge-sprint-ave}
\end{center}
\end{figure}

Fig.~\ref{Fig:cov-edge-cfl-sprint} illustrates the convergence of Algorithm~\ref{alg:edge-cfl} (i.e., Step 3 in  Algorithm~\ref{alg:opt-edge}). From Fig.~\ref{Fig:cov-edge-cfl-sprint}, we can see that Algorithm~\ref{alg:edge-cfl} converges to a feasible solution to Problem \ref{Prob:new-low-p_t} within 8000 iterations. This feasible solution is the same as the one shown in Fig.~\ref{Fig:cov-path-cfl-sprint}, with network cost 10. By comparing Fig.~\ref{Fig:cov-edge-cfl-sprint} with Fig.~\ref{Fig:cov-path-cfl-sprint}, we can see that Algorithm~\ref{alg:edge-cfl} converges much  more  slowly than Algorithm~\ref{alg:path-cfl}. 
Fig.~\ref{Fig:opt-edge-sprint} illustrates the convergence of Algorithm~\ref{alg:opt-edge}. We can see that there exist  for Problem~\ref{Prob:new-low-p_t}, multiple feasible mixing solutions  which  have different network costs, and  running Algorithm~\ref{alg:edge-cfl}    for multiple times can result in different feasible solutions. Thus, the minimum network cost   may decrease as the number of iterations  increases.   Algorithm~\ref{alg:opt-edge} obtains the optimal network cost 10 to Problem~\ref{Prob:new-low-p_t}     within 100 iterations.  Fig.~\ref{Fig:opt-edge-sprint-ave} illustrates the average convergence of Algorithm~\ref{alg:opt-edge} over 1000 instances. We can see that on average, within 300 iterations, the minimum network cost under Algorithm~\ref{alg:opt-edge} converges to 10, which is the optimal network cost to Problem \ref{Prob:new-low-p_t} obtained by the centralized algorithm in Algorithm~\ref{alg:new-low-central}. By comparing Fig.~\ref{Fig:opt-edge-sprint-ave} with Fig.~\ref{Fig:opt-path-sprint-ave}, we can see that Algorithm~\ref{alg:opt-edge} converges much  more slowly than Algorithm~\ref{alg:opt-path}. 

\section{Conclusion and Future Work}

In this paper,  we introduce linear network mixing coefficients  for code constructions of general integer connections. For such code constructions, we pose the problem of cost minimization for the subgraph involved in the coding solution, and relate this minimization to  a  path-based CSP and an edge-based CSP, respectively.  We
present a path-based probabilistic distributed algorithm and an edge-based  probabilistic distributed algorithm with almost sure convergence in finite time  by  applying CFL.
Our approach allows fairly general coding across flows, guarantees no greater cost than routing,    and  demonstrates a possible distributed implementation.  Numerical results illustrate the performance improvement of our approach over  existing methods.

This paper opens up several directions for future research. For instance, the proposed optimization-based linear network code construction for general integer connections can be extended to design route finding protocols of superior performance for general connections.  In addition, a possible direction for future research is to design dynamic approaches not only to build new subgraphs, but also to update them as they  evolve, so as  to reflect changes in topologies for varying networks,  as occur in such settings as peer-to-peer (P2P) networks. Another interesting extension of  the proposed approach to content-centric cache-enabled networks is to incorporate  cache placement (which creates new sources)  into the cost minimization for the subgraph involved in the coding solution in this work. Finally, the proposed approach for wireline networks can also be generalized to wireless networks by considering hyper edges to model broadcast links.

\section*{Appendix A: Proof of Theorem~\ref{Thm:feasibility-new-low-opt}}

Let  $\mathbf z$, $\mathbf x$, $\boldsymbol \beta$  and $\mathbf f$ denote a feasible solution to Problem \ref{Prob:new-low}.  Note that  $\mathbf x$ is uniquely determined by  $\boldsymbol \beta$ according to \eqref{eqn:f-x-src-int} and \eqref{eqn:mix-x-inter-int}, which correspond to Conditions 1) and 2) in Definition \ref{Def:feasibility-mixing}. In addition, by  \eqref{eqn:mix-x-dest-int}, which corresponds to Condition 3) in Definition \ref{Def:feasibility-mixing}, we know that  $\mathbf x$ ensures that for each terminal, the extraneous flows are not mixed with the desired flows on the paths to the terminal. We shall show that,  based on  {$\boldsymbol \beta$}, we can find local coding coefficients  $\boldsymbol \alpha$, which  uniquely determine feasible global coding coefficients  $\mathbf c$ according to
\begin{align}
& \mathbf c_{s_pj}=\mathbf e_p, \quad (s_p,j)\in \mathcal E, \ p\in \mathcal P\label{eqn:f-c-src-proof}\\
&\mathbf c_{ij}=\sum_{k\in\mathcal I_i}\alpha_{kij}\mathbf c_{ki}, \quad (i,j)\in \mathcal E, i\not\in \mathcal S\label{eqn:f-c-edge-proof}
\end{align}
where correspond to   Conditions 1) and 2) in Definition \ref{Def:feasibility}).

First, we choose $\alpha_{kij}=0$ if $\beta_{kij}=0$.  Note that,  as a feasible solution,  $\mathbf x$ is uniquely determined by  $\boldsymbol \beta$ according to \eqref{eqn:f-x-src-int} and \eqref{eqn:mix-x-inter-int}. In addition, we choose  $\mathbf c$ based on   $\boldsymbol \alpha$ according to \eqref{eqn:f-c-src-proof} and \eqref{eqn:f-c-edge-proof}. Thus,  by \eqref{eqn:f-c-src-proof}, \eqref{eqn:f-c-edge-proof}, \eqref{eqn:f-x-src-int} and \eqref{eqn:mix-x-inter-int}, we can show that $c_{ij,p}=0$ if $x_{ij,p}=0$  by induction. Thus, by \eqref{eqn:mix-x-dest-int}, we have
\begin{align}
c_{it,p}=0, \quad  i\in \mathcal I_t, \ p \not\in \mathcal P_t, \ t\in \mathcal T.\label{eqn:mix-c-dest}
\end{align}
In other words, each terminal $t\in \mathcal T$  only needs to consider  $\left(c_{it,p}\right)_{i\in \mathcal I_t, p\in\mathcal P_t}$  for decoding. By \eqref{eqn:mix-f-conv-int}, we can form a flow path from source $s_p$  to terminal $t$, which consists of the edges in  $\mathcal L_p^t\triangleq\{(i,j)\in \mathcal E: f_{ij,p}^{t}=1\}$, where $p\in \mathcal P_t$. By \eqref{eqn:z-x-int} and \eqref{eqn:mix-f-z-int},  we know  that for all $t\in\mathcal T$, there exists $P_t$ edge-disjoint unit flow paths, each one from one source $s_p$ to terminal $t$, where $p\in \mathcal P_t$. Note that  $\mathbf x$ satisfies  all the conditions in Definition \ref{Def:feasibility-mixing}. Thus, by \eqref{eqn:mix-f-x-int}, we know that
all the flow paths satisfy that for each terminal, the extraneous flows (information) are not mixed with the desired flows (information) on the  flow paths to the terminal.
 Let $A_t$ denote the $P_t \times P_t$ matrix, each row (out of $P_t$ rows) of which consists of the $P_t$ elements in  $\left(c_{it,p}\right)_{p\in\mathcal P_t}$ for the last edge $(i,t)$ on one flow path (out of $P_t$ flow paths) to terminal $t$, where $i\in\mathcal I_t$.
Note that  $A_t$ (in terms of  $\left(c_{it,p}\right)_{i\in \mathcal I_t, p\in\mathcal P_t}$ for all   $P_t$ flow paths)  can also be expressed in terms of local coding coefficients $\boldsymbol \alpha$  by \eqref{eqn:f-c-src-proof} and \eqref{eqn:f-c-edge-proof}.\footnote{Given all the local coding coefficients  $\boldsymbol\alpha$, we can compute global coding coefficients $\mathbf c$, and vice versa.} By \eqref{eqn:f-c-src-proof} and \eqref{eqn:f-c-edge-proof}, we know that 1) and 2) of Definition \ref{Def:feasibility} are satisfied. Therefore, it remains to show that 3) of Definition \ref{Def:feasibility} is satisfied. This can be achieved by choosing  $\{ \alpha_{kij}:(k,i),(i,j)\in\mathcal E,\beta_{kij}\neq 0\}$ so that $A_t$ for all $t\in \mathcal T$ are full rank, i.e.,  $\prod_{t\in \mathcal T}\det (A_t)\neq 0$\cite[Pages19-20]{Fragouli:2007:NCF}. 

Next,  we show that if $F>T$, we can choose  $\{ \alpha_{kij}:(k,i),(i,j)\in\mathcal E,\beta_{kij}\neq 0\}$  such that  $\prod_{t\in \mathcal T}\det (A_t)\neq 0$. We first show that  for all $t\in \mathcal T$, $\det (A_t)$  is not  identically equal to zero. For all $p\in\mathcal P_t$ and  $t\in \mathcal T$, choose $\alpha_{kij}=1$ for all edges $(k,i),(i,j)\in \mathcal E$ on the flow path from source $s_p$ to terminal $t$, i.e., $(k,i),(i,j)\in \mathcal L_p^t$, and $\alpha_{kij}=0$ for all edges $(k,i),(i,j)\in \mathcal E$ not on the same flow path, i.e., $(k,i)$ or $(i,j)\not\in \mathcal L_p^t $. This local coding coefficient assignment makes $A_t$ the $P_t\times P_t$ identity matrix. Thus, $\det (A_t)$ is  not identically equal to zero\cite[Page 20]{Fragouli:2007:NCF}.  Then, we show that   $\prod_{t\in \mathcal T}\det (A_t)$  is not   equal to zero, using the algebraic framework in \cite[Pages 31-32]{Fragouli:2007:NCF}. 
 Similarly to the proof of Theorem 3.2 in \cite[Pages 31-32]{Fragouli:2007:NCF}, we can show that  $\prod_{t\in \mathcal T}\det (A_t)$  is a polynomial in unknown variables  $\{ \alpha_{kij}:(k,i),(i,j)\in\mathcal E,\beta_{kij}\neq 0\}$ and that the degree of each unknown variable is at most $T$. Therefore, by Lemma 2.3 \cite[Page 21]{Fragouli:2007:NCF}, we can show  that, for $F>T$, there exists a choice of  $\{ \alpha_{kij}:(k,i),(i,j)\in\mathcal E,\beta_{kij}\neq 0\}$ such that  $\prod_{t\in \mathcal T}\det (A_t)\neq 0$. Recall that $\alpha_{kij}=0$ if $\beta_{kij}=0$. Therefore, based on  $\boldsymbol\beta$, we can obtain  $\boldsymbol\alpha$ that leads  to feasible  $\mathbf  c$.

\end{document}